\declaretheorem[refname={Theorem,Theorems},Refname={Theorem,Theorems},numberwithin=section]{theorem}
\declaretheorem[numberlike=theorem]{lemma}
\declaretheorem[numberlike=theorem]{corollary}
\declaretheorem[numberlike=theorem]{observation}
\declaretheorem[numbered=no]{remark}
\newcommand{\Z}{\mathbb{Z}}
\newcommand{\F}{\mathbb{F}}
\renewcommand{\tilde}{\widetilde}
\renewcommand{\hat}{\widehat}
\renewcommand{\bar}{\overline}
\DeclareMathOperator*{\MM}{\operatorname{MM}}
\newcommand{\poly}{\operatorname{poly}}
\newcommand{\rank}{\operatorname{rank}}
\xdef\csname m\x\endcsname{\noexpand\mathbf{\x}}
\xdef\csname om\x\endcsname{\noexpand\overline{\noexpand\mathbf{\x}}}
\xdef\csname c\x\endcsname{\noexpand\mathcal{\x}}
\newcommand{\otau}{\noexpand{\overline{\tau}}}
\newcommand{\omu}{\noexpand{\overline{\mu}}}
\newcommand{\osigma}{\noexpand{\overline{\sigma}}}%
\newcommand{\os}{\noexpand{\overline{s}}}
\newcommand{\ot}{\noexpand{\overline{t}}}
\newcommand{\og}{\noexpand{\overline{g}}}
\newcommand{\ou}{\noexpand{\overline{u}}}
\newcommand{\ov}{\noexpand{\overline{v}}}
\newcommand{\ow}{\noexpand{\overline{w}}}
\newcommand{\ox}{\noexpand{\overline{x}}}
\newcommand{\oy}{\noexpand{\overline{y}}}
\newcommand{\oz}{\noexpand{\overline{z}}}
\newcommand{\oc}{\noexpand{\overline{c}}}
\renewcommand{\paragraph}{%
	\@startsection{paragraph}{4}%
	{\z@}{1.25ex \@plus 1ex \@minus .2ex}{-1em}%
	{\normalfont\normalsize\bfseries}%
}
\newcommand*{\defeq}{\stackrel{\mathsmaller{\mathrm{def}}}{=}}
\newenvironment{problem}[1]{
  \vskip 1em \small \begin{tabular}{@{}ll} \multicolumn{2}{@{}l}{\textbf{#1}}\\[3pt]
}{
  \end{tabular} \vskip 1em
}
\title{On Dynamic Graph Algorithms with Predictions\thanks{This work is supported by the Austrian Science Fund (FWF): P 32863-N. This project has received funding from the European Research Council (ERC) under the European Union's Horizon 2020 research and innovation programme (grant agreement No~947702). Part of this work was performed when Yasamin Nazari was affiliated with University of Salzburg, and Adam Polak was affiliated with Max-Planck Institute of Informatics. The discussions leading to this work was initiated at the AlgPiE 2022 workshop, organized by IGAFIT. 
}}
\author{Jan van den Brand \\ {\small Georgia Institute of Technology} \and Sebastian Forster \\ {\small University of Salzburg} \and Yasamin Nazari  \\ {\small VU Amsterdam} \and Adam Polak \\ {\small Bocconi University}}
\date{}
\begin{document}

\pagenumbering{roman}

\maketitle

\begin{abstract}
    Dynamic algorithms operate on inputs undergoing updates, e.g., insertions or deletions of edges or vertices. After processing each update, the algorithm has to answer queries regarding the current state of the input data. We study dynamic algorithms in the model of algorithms with predictions (also known as learning-augmented algorithms). We assume the algorithm is given imperfect predictions regarding future updates, and we ask how such predictions can be used to improve the running time. In other words, we study the complexity of dynamic problems parameterized by the prediction accuracy. This can be seen as a model interpolating between classic online dynamic algorithms -- which know nothing about future updates -- and offline dynamic algorithms with the whole update sequence known upfront, which is similar to having perfect predictions. Our results give smooth tradeoffs between these two extreme settings.
 
Our first group of results is about partially dynamic problems with edge updates. We give algorithms for incremental and decremental transitive closure and approximate APSP that take as an additional input a predicted sequence of updates (edge insertions, or edge deletions, respectively). They preprocess it in $\tilde{O}(n^{(3+\omega)/2})$ time, and then handle updates in $\tilde{O}(1)$ worst-case time and queries in $\tilde{O}(\eta^2)$ worst-case time. Here $\eta$ is an error measure that can be bounded by the maximum difference between the predicted and actual insertion (deletion) time of an edge, i.e., by the $\ell_\infty$-error of the predictions.

The second group of results concerns fully dynamic problems with vertex updates, where the algorithm has access to a predicted sequence of the next $n$ updates. We show how to solve fully dynamic triangle detection, maximum matching, single-source reachability, and more, in $O(n^{\omega-1}+n\eta_i)$ worst-case update time.

Here $\eta_i$ denotes how much earlier the $i$-th update occurs than predicted.

Our last result is a reduction that transforms a worst-case incremental algorithm without predictions into a fully dynamic algorithm which is given a predicted deletion time for each element at the time of its insertion. As a consequence we can, e.g., maintain fully dynamic exact APSP with such predictions in
$\tilde{O}(n^2)$ worst-case vertex insertion time and $\tilde{O}(n^2 (1+\eta_i))$ worst-case vertex deletion time (for the prediction error $\eta_i$ defined as above).

Our algorithms from the first two groups, given sufficiently accurate predictions, achieve running times that go below known lower bounds for classic (without predictions) dynamic algorithms under the OMv Hypothesis. Moreover, our dependence on the prediction errors (so-called smoothness) is conditionally optimal, under plausible fine-grained complexity assumptions, at least in certain parameter regimes.

\end{abstract}

\newpage

\tableofcontents

\newpage

\pagenumbering{arabic}

\section{Introduction}

Dynamic algorithms maintain a solution to a computational problem -- e.g., single source distances in a graph -- for an input that undergoes a sequence of updates -- e.g., edge insertions or deletions. 
The goal is to process such updates as efficiently as possible, at least faster than recomputing the solution from scratch.

This is however not always plausible, as evidenced by numerous fine-grained conditional lower bounds, see, e.g., \cite{Patrascu10, AbboudW14, HenzingerKNS15}.

The recent line of research on learning-augmented algorithms provides many examples of how performance of classic algorithms can be provably improved using imperfect \emph{predictions}, generated, e.g., by machine-learning models (see surveys by Mitzenmacher and Vassilvitskii~\cite{MitzenmacherV20, MitzenmacherV22}). Among others, predictions allow us to improve competitive ratios of online algorithms (e.g., \cite{LykourisV21, PurohitSK18}), running times of static algorithms (e.g., \cite{DinitzILMV21, ChenSVZ22}), approximation ratios of polynomial-time approximation algorithms (e.g., \cite{ErgunFSWZ22, GamlathLNS22, NguyenCN23}). Often these improvements go beyond what is provably possible for classic algorithms (e.g., \cite{LykourisV21, ErgunFSWZ22} and many others). In this work we ask the following natural question:

\begin{center}
  \it How could we use predictions to improve dynamic algorithms?
\end{center}

We make two choices to narrow down this question. First, we focus on \emph{predictions about future input data}. This is akin to many previous results on learning-augmented online algorithms (e.g., \cite{LykourisV21, BamasMRS20}), and in contrast with settings where the predictions are about the output (e.g., \cite{AntoniadisCEPS23, DinitzILMV21, ErgunFSWZ22}). Second, we focus on \emph{improving the running time}, which is the most studied performance measure for dynamic algorithms. 

\paragraph{Online and Offline Dynamic Problems, and Conditional Lower Bounds.}

Dynamic problems are most often studied in their online variants, i.e., future updates are not known to the algorithm, and it has to perform them one by one.
On the other hand, offline dynamic algorithms (see, e.g, \cite{Eppstein94,SankowskiM10,KarczmarzL15,BringmannKN19,PengSS19,ChenGHPS20}) are given a sequence of updates upfront. Note that the offline model is equivalent to our proposed model with predictions in the case that predictions are perfectly accurate. That is, we study an interpolation between offline and online dynamic algorithms, and ask how an algorithm's performance degrades with increasing inaccuracies of predictions. This interpolation question makes sense only for dynamic problems whose offline variants admit faster algorithms than their corresponding online variants do. Ideally, we would like to be able to say that predictions -- even imperfect ones -- allow certain dynamic problems to be solved faster than what is provably plausible without such predictions in the classic online model. In other words, we aim for running times going below known conditional lower bounds.

There are many reductions showing hardness of dynamic problems under popular fine-grained complexity assumptions about static problems, such as 3SUM, APSP, or CNF-SAT (see, e.g., \cite{AbboudW14}). However, all these reductions share what is from our perspective a limitation: they are not adaptive, they are capable of producing the whole input sequence at once. Hence, they already imply hardness for offline variants of dynamic problems, and thus they cannot provide tight conditional lower bounds for those dynamic problems whose offline variants happen to be strictly easier than corresponding online variants. It is an interesting open problem to find a reduction -- from a natural static problem to a natural online dynamic problem -- that does not have this limitation~\cite{Abboud22}.

To our best knowledge, the only known tool in fine-grained complexity that is capable of distinguishing between online and offline variants of dynamic problems is the Online Matrix-Vector Multiplication (OMv) Hypothesis~\cite{HenzingerKNS15}, a conditional assumption about an online problem itself. Therefore, in this work we focus (mostly) on problems with known tight OMv lower bounds. These lower bounds often show that recomputing from scratch is basically the best we can hope for (without predictions).

\paragraph{Warm-up: OMv with Predictions.}

Before we delve into graph problems, let us start with a simple sanity check and verify that the cubic time barrier for OMv can be broken using predictions. If it was not the case, our project would be hopeless because then problems with known OMv lower bounds would remain hard even with predictions.

Recall that in the OMv problem we are first given Boolean matrix $M \in \{0, 1\}^{n \times n}$, and then we need to answer $n$ queries of the form: Given vector $v_i \in \{0, 1\}^n$, what is the Boolean product $Mv$? We have to answer queries one by one, in an online fashion: we have to output $Mv_i$ before we get to learn what $v_{i+1}$ is. OMv is conjectured to require cubic time, up to subpolynomial factors~\cite{HenzingerKNS15}. We propose the following variant of OMv with predictions:

\begin{problem}{Online matrix-vector multiplication (OMv) with predictions}
  Input offline:
    & matrix $M \in \{0,1\}^{n \times n}$; \\
    & predicted vectors $\hat{v}_1, \hat{v}_2, \ldots, \hat{v}_n \in \{0,1\}^n$. \\[3pt]
  Input online:
    & vectors $v_1, v_2, \ldots, v_n \in \{0,1\}^n$. \\[3pt]
  Output:
    & matrix-vector products $Mv_1, Mv_2, \ldots, Mv_n \in \{0,1\}^n$ over Boolean semiring.
\end{problem}

\noindent
We show a simple algorithm whose running time depends on the $\ell_1$-error of the predictions, which is one of the standard error measures for learning-augmented algorithms (see, e.g., \cite{LykourisV21, AntoniadisCEPS23}).

\begin{observation}
\label{obs:omvwpred}
OMv with predictions can be solved in total time $O(n^\omega + n\sum_i ||v_i-\hat{v}_i||_1)$. More precisely, preprocessing runs in $O(n^\omega)$ time, and $i$-th query requires $O(n||v_i-\hat{v}_i||_1)$ time.
\end{observation}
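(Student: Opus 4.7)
The plan is to lift the computation from the Boolean semiring to the integer semiring. Over the Boolean semiring one cannot ``subtract'' the contribution of bits that $\hat{v}_i$ has but $v_i$ does not, which is the main obstacle; over the integers the product is linear, so $Mv_i = M\hat{v}_i + Me_i$ where $e_i \defeq v_i - \hat{v}_i$. Thresholding at the end recovers the Boolean product, because for $0/1$ vectors the Boolean $(Mv_i)_j$ equals $1$ iff the integer $\sum_k M_{jk}(v_i)_k$ is at least $1$.

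First, in preprocessing, I would stack the predictions into a matrix $\hat{V} \defeq [\hat{v}_1 \mid \hat{v}_2 \mid \cdots \mid \hat{v}_n] \in \{0,1\}^{n\times n}$ and compute the integer product $P \defeq M\hat{V} \in \mathbb{Z}^{n \times n}$ by a single fast matrix multiplication in $O(n^\omega)$ time (entries of $P$ are bounded by $n$, which costs only logarithmic overhead that I absorb into the $O(\cdot)$, or alternatively into a $\tilde{O}(\cdot)$ if one prefers). This stores all $n$ predicted integer products in advance.

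Next, for the online phase, upon receiving $v_i$ I form $e_i \defeq v_i - \hat{v}_i \in \{-1,0,1\}^n$, which has exactly $\|v_i-\hat{v}_i\|_1$ nonzero entries. I then compute $Me_i$ over the integers in $O(n\cdot\|v_i-\hat{v}_i\|_1)$ time by iterating over the $\|e_i\|_1$ nonzero coordinates of $e_i$ and adding/subtracting the corresponding column of $M$ into an accumulator. Adding $Me_i$ to the $i$-th column of $P$ in $O(n)$ time gives the integer vector $Mv_i$, and entrywise thresholding ($>0 \mapsto 1$, $=0 \mapsto 0$) in $O(n)$ time outputs the Boolean product.

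Summing up, preprocessing costs $O(n^\omega)$ and query $i$ costs $O(n\cdot\|v_i-\hat{v}_i\|_1)$, which yields the stated total bound. The only conceptual point worth highlighting in the write-up is the reason for moving to the integers -- namely that linearity of the integer matrix-vector product is what enables splitting off a sparse correction; nothing deeper is going on.
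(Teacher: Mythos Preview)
Your proposal is correct and follows essentially the same approach as the paper: precompute $M\hat{V}$ over the integers in $O(n^\omega)$ time, then for each query compute the sparse correction $M(v_i-\hat{v}_i)$ in $O(n\|v_i-\hat{v}_i\|_1)$ time and threshold back to Boolean. Your write-up is slightly more explicit about the thresholding step and the additive $O(n)$ per query, but the idea is identical.
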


\begin{proof}
In the preprocessing phase, compute $M \cdot [\hat{v}_1, \hat{v}_2, \ldots, \hat{v}_n]$ over integers, in time $O(n^\omega)$. Then, after receiving each vector $v_i$, compute (over integers) $Mv_i = M\hat{v}_i + M(v_i - \hat{v}_i)$. The first term, $M\hat{v}_i$ can be retrieved in $O(n)$ time because it is the $i$-th column of the matrix computed during the preprocessing. The second term, $M(v_i - \hat{v}_i)$ can be computed in time $O(n ||v_i-\hat{v}_i||_1)$, because $||v_i-\hat{v}_i||_1$ equals the number of non-zeros of $v_i-\hat{v}_i$. Indeed, $||v_i-\hat{v}_i||_1 = ||\hat{v}_i-v_i||_0$, because $v_i$ and $\hat{v}_i$ contain only zeros and ones.
\end{proof}

Note that $||v_i-\hat{v}_i||_1 \leqslant n$, and hence $\sum_i ||v_i-\hat{v}_i||_1 \leqslant n^2$, so even with arbitrarily bad predictions our algorithm never needs more than cubic time, i.e., it is \emph{robust} in the learning-augmented terminology. On the other hand, for perfect predictions (i.e., $\sum_i ||v_i-\hat{v}_i||_1 = 0$) we achieve the running time of the best offline algorithm, i.e., we are \emph{consistent}. In \Cref{sec:discussion} we discuss these concepts further.

Moreover, the running time of this algorithm matches two conditional lower bounds. First, the dependence on the total prediction error $\eta = \sum_i ||v_i-\hat{v}_i||_1$ cannot be improved from $n\eta$ to $n\eta^{0.99}$ under the OMv hypothesis; this is because OMv reduces to OMv with predictions, with $\eta = O(n^2)$, by providing arbitrary (e.g., all-zero) vectors as predictions. Second, the $n^\omega$ term cannot be improved under the assumption that Boolean matrix multiplication is not easier than the general matrix multiplication, because even with perfect predictions (i.e., $\eta = 0$), solving OMv with predictions entails computing a Boolean matrix product. These arguments, however, do not rule out the possibility of, e.g., an $O(n^\omega + \eta^{3/2})$ time algorithm, which would be a meaningful improvement over the above algorithm. We find it an interesting open problem to provide a fine-grained conditional lower bound, matching the running time of \Cref{obs:omvwpred}, that holds already restricted to instances with $\eta = \Theta(n^\alpha)$ for arbitrary fixed $\alpha \in (0, 2)$.

Finally, note that going below the cubic time barrier for OMv, even with accurate predictions, requires fast matrix multiplication. Hence, it is not a surprise that our graph algorithms discussed below -- that go beyond known OMv-based lower bounds -- use algebraic algorithms for matrix multiplication.

\subsection{Our Results}

Our main results can be categorized into three groups that differ in the considered settings, types of predictions and measures of prediction errors. We discuss the groups one by one. \Cref{tab:results} provides a summary.

\paragraph{Partially Dynamic Graph Problems.}

Our first group of results is about partially dynamic problems with edge updates. These are problems on graphs undergoing edge insertions (incremental variant), and graphs undergoing edge deletions (decremental variant), but not both types of updates at the same time (this would be a fully dynamic variant, which we address later). While in general it is not the case that incremental and decremental variants must be equivalent, it turns out that all our results give the same bounds for both variants.

Our partially dynamic (incremental and decremental) algorithms take as an additional input a predicted sequence of updates (edge insertions or edge deletions, respectively). To illustrate our setting, we provide as an example a detailed definition of the incremental transitive closure problem, also known as all-pairs reachability, in directed graphs.

\begin{problem}{Incremental transitive closure with predictions}
  Input offline:
    & predicted sequence of edge insertions $\hat{e}_1, \hat{e}_2, \ldots, \hat{e}_m \in E$. \\[3pt]
  Input online:
    & sequence of interleaved updates and queries, i.e.,  \\
    & -- edge insertions $e_1, e_2, \ldots, e_m \in E$, and \\
    & -- reachability queries $(s_1, t_1), (s_2, t_2), \ldots, (s_q, t_q) \in V \times V$.\\[3pt]
  Output:
    & for each query $(s_i, t_i)$, \\
    & -- YES if there is a path from $s_i$ to $t_i$ in the graph at the current moment, \\
    & -- NO otherwise.
\end{problem}

The decremental variant is very similar; the difference is that the predicted sequence tells the algorithm in which order the edges are supposed to be deleted, not inserted, and that the online update operations are edge deletions, not edge insertions. In the incremental setting, it is natural to assume we start with an empty graph. However, in the decremental setting (also for classic algorithms without predictions), the initial graph has to be given upfront. In our setting with predicted sequence of updates, the initial graph can be considered to be given implicitly by the set of all edges appearing in the sequence. On the other hand, we can also think of an equivalent input format in which we are first given the initial graph $G=(V, E)$ where each edge $e \in E$ comes with an additional label $\hat{t}(e) \in [m]$ describing the predicted position of this edge in the sequence of deletions.

The second problem that we consider in this setting is the $(1+\epsilon)$-approximate all-pairs shortest paths (APSP) problem in unweighted directed graphs. The problem definition is similar to the above transitive closure definition -- the only difference is that each query $(s_i, t_i)$ has to output a number in range $[d(s_i, t_i), (1+\epsilon)\cdot d(s_i, t_i)]$, where $d(s_i, t_i)$ denotes the length of a shortest path from $s_i$ to $t_i$ at the current moment.

The query times of our algorithms for these problems depend on prediction error $\eta$ that has a somewhat technical definition. We discuss this in detail in \Cref{sec:overview:partially,sec:partially}. For now let us only say that $\eta$ can be upper bounded by the largest absolute difference between the predicted and actual insertion (deletion) time of an edge -- i.e., the $\ell_{\infty}$ error of the predictions -- which we denote by $\eta_{\infty} \defeq \max_{e \in E} |t(e) - \hat{t}(e)|$, where $t(e)$ denotes the actual update time of an edge (i.e., $t(e_i) \defeq i$), and $\hat{t}(e)$ denotes the predicted update time of an edge (i.e., $\hat{t}(\hat{e}_j \defeq j$). We discuss the choice of this error measure in \Cref{sec:discussion}.

Our partially dynamic algorithms with predictions are specified in \Cref{thm:intro:partially}.

\begin{theorem}[Full details in Theorems~\ref{thm:incrementalapsp} and~\ref{thm:decrementalapsps}]\label{thm:intro:partially}
Each of the following dynamic graph problems:
\begin{itemize}[noitemsep]
\item incremental transitive closure,
\item decremental transitive closure,
\item incremental $(1+\epsilon)$-approximate unweighted all-pairs shortest paths,
\item decremental $(1+\epsilon)$-approximate unweighted all-pairs shortest paths
\end{itemize}
with a predicted sequence of edge updates can be solved by a deterministic algorithm with  $\tilde{O}(n^{(3+\omega)/2})$ preprocessing time, $\tilde{O}(1)$ worst-case edge update time, and $\tilde{O}(\eta^2)$ query time, for prediction error $\eta \leqslant \eta_{\infty}$.
\end{theorem}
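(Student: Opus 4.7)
The plan is to exploit the predictions in the preprocessing phase by building a time-indexed reachability (resp.\ approximate-distance) oracle on the predicted graph, and then to answer each query by combining that oracle with a small ``correction'' graph capturing the discrepancies between the predicted and actual graphs at the current moment.

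\textbf{Discrepancy bound.} Let $t(e)$ and $\hat{t}(e)$ denote the actual and predicted update times of an edge $e$, and let $\eta \geq \eta_\infty = \max_e |t(e) - \hat{t}(e)|$. In the incremental case, at every time $i$ one has $\hat{G}_{i-\eta} \subseteq G_i \subseteq \hat{G}_{i+\eta}$, so the ``extra'' set $E^+_i \defeq G_i \setminus \hat{G}_{i-\eta}$ satisfies $|E^+_i| \leq 2\eta$. The decremental case is symmetric after time reversal.

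\textbf{Preprocessing ($\tilde{O}(n^{(3+\omega)/2})$).} Label each edge of the predicted graph by $\hat{t}(e)$. For the transitive-closure problems, compute $\hat{T}(u,v) = \min_{P : u \to v}\max_{e \in P}\hat{t}(e)$ for every ordered pair $(u,v)$ --- i.e., the earliest time at which $u$ reaches $v$ in the predicted sequence --- via a Zwick-style all-pairs bottleneck-reachability algorithm based on rectangular matrix multiplication. For approximate APSP, run the analogous computation at $O(\log_{1+\epsilon} n)$ geometrically spaced distance thresholds, so that an $O(1)$-time oracle returns a $(1+\epsilon)$-approximate distance from $u$ to $v$ in $\hat{G}_{j}$ for any queried $j$.

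\textbf{Updates ($\tilde{O}(1)$).} Record $t(e)$ on arrival, compare to $\hat{t}(e)$, and insert/remove $e$ in a linked-list representation of $E^+_i$. Since $|E^+_i| \leq 2\eta$ throughout, this is $\tilde{O}(1)$ per update.

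\textbf{Query ($\tilde{O}(\eta^2)$).} Given a query $(s,t)$ at time $i$:
\begin{enumerate}
\item Set $V' = \{s, t\} \cup V(E^+_i)$, so $|V'| = O(\eta)$.
\item Build an auxiliary digraph $H$ on $V'$ containing (i) every edge of $E^+_i$, and (ii) for each pair $(u,v) \in V' \times V'$, a directed edge $u \to v$ iff $\hat{T}(u,v) \leq i - \eta$ (for APSP, weighted by the approximate distance returned by the oracle).
\item Run BFS (respectively, Dijkstra) from $s$ in $H$ and output the answer.
\end{enumerate}
Both constructing $H$ and searching it cost $O(\eta^2)$.

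\textbf{Correctness.} Any $s$-$t$ path in $G_i$ can be split at every $E^+_i$-edge it contains; each maximal subpath lies in $G_i \setminus E^+_i = \hat{G}_{i-\eta}$, so its endpoints are in $V'$ and it is witnessed by an edge of $H$. Conversely, every $H$-path lifts back to a $G_i$-path by expanding oracle edges to certificate paths in $\hat{G}_{i-\eta}$. The same decomposition shows that for $(1+\epsilon)$-approximate APSP, a true shortest $s$-$t$ path breaks into $\hat{G}_{i-\eta}$-segments whose lengths sum to its length, so approximating each segment within $(1+\epsilon)$ (and keeping the $E^+_i$-edges unchanged) yields a $(1+\epsilon)$-approximate answer overall.

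\textbf{Main obstacle.} The delicate step is the preprocessing: it must, in a single $\tilde{O}(n^{(3+\omega)/2})$-time computation, enable $O(1)$ per-pair reachability / approximate-distance lookups in $\hat{G}_{i-\eta}$ for \emph{every} possible current time $i$. For exact reachability this reduces cleanly to classical all-pairs bottleneck paths; the approximate-APSP variant requires a time-parametrised version of Zwick's APSP algorithm --- obtained by layering $O(\log n)$ discretised distance thresholds while staying within the target budget --- and a careful argument that cascading the $(1+\epsilon)$ factor across up to $O(\eta)$ concatenated oracle edges does not degrade the approximation ratio. This is where I expect the main technical work.
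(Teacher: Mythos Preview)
Your approach is essentially the paper's: precompute hop-bounded all-pairs bottleneck paths with predicted times as weights, then answer each query by building a small auxiliary graph on the endpoints of the ``error'' edges plus the query pair, with oracle edges encoding reachability/approximate distance in the correctly-predicted part of the graph, and run BFS/Dijkstra there.

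The one substantive gap is in how you select the cutoff. You fix it at $i-\eta$ for a parameter $\eta \geq \eta_\infty$, but the algorithm has no way to know $\eta_\infty$ (or any upper bound on it) while running: at time $i$, an edge with large $|t(e)-\hat t(e)|$ that has not yet arrived is invisible, so your containment $\hat G_{i-\eta}\subseteq G_i$ cannot be certified. The paper sidesteps this by choosing the cutoff \emph{adaptively}: let $p$ be the largest index with $\{\hat e_1,\ldots,\hat e_p\}\subseteq\{e_1,\ldots,e_i\}$ (the longest predicted prefix already fully inserted), and take $E_{err}=\{e_1,\ldots,e_i\}\setminus\{\hat e_1,\ldots,\hat e_p\}$. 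Then $|E_{err}|=i-p=:\bar\eta_i$, and a short argument shows $\bar\eta_i\le\eta_\infty$; this is the ``$\eta\le\eta_\infty$'' in the theorem statement (note the direction is opposite to your $\eta\ge\eta_\infty$). The value $p$ is maintainable in $\tilde O(1)$ per update with a balanced BST on the inserted indices, and plugging it in place of your $i-\eta$ makes the rest of your argument go through verbatim, with the bonus that the per-query cost is $\tilde O(\bar\eta_i^{\,2})$, which can be much smaller than $\tilde O(\eta_\infty^{\,2})$.

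Your closing worry about cascading $(1+\epsilon)$ factors is unfounded, and in fact your own correctness paragraph already disposes of it. A true shortest $s$--$t$ path decomposes into segments in $\hat G_p$ interleaved with single $E_{err}$-edges; each segment's oracle weight is at most $(1+\epsilon)$ times its length, and since the segment lengths \emph{sum} to the total length, the $H$-path weight is at most $(1+\epsilon)\,d_G(s,t)$ regardless of how many oracle edges appear. The preprocessing beyond plain APBP is exactly what you describe: compute $B^{(d)}$ for $d\in\{(1+\epsilon)^0,(1+\epsilon)^1,\ldots\}$ by repeated $(\min,\max)$-squaring, which stays within $\tilde O(n^{(3+\omega)/2})$; finding the right threshold $d$ at query time is a binary search costing $O(\log\log n)$ per pair rather than $O(1)$, but that is absorbed in the $\tilde O$.
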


Without predictions, both these problems (transitive closure and approximate APSP) in both partially dynamic variants (incremental and decremental) can be solved by algorithms with $O(n^3)$ total update time and $O(1)$ query time~\cite{IbarakiK83,Italiano86,PoutreL87,DemetrescuI00,DemetrescuI06,BaswanaHS07,RodittyZ08,Lacki13,Bernstein13}, and under the OMv hypothesis this running time is tight (up to subpolynomial factors) even if one allows up to $O(n^{1-\varepsilon})$ query time~\cite{HenzingerKNS15}. In other words, classic algorithms without predictions face the cubic time barrier for these problems, and we show it can be bypassed with sufficiently accurate predictions.

We also show that the dependence of the running time in \Cref{thm:intro:partially} on the prediction error $\eta_{\infty}$ cannot be improved to strictly subquadratic under the OMv hypothesis.

\begin{restatable}{theorem}{ThmPartiallyOMvLB}
\label{thm:partiallyomvlb}
Unless the OMv Hypothesis fails, there is no algorithm for incremental (decremental) transitive closure with predictions with arbitrary polynomial prepocessing time, worst-case $O(n^{1-\varepsilon})$ edge insertion (deletion) time, and $O(\eta_\infty^{2-\varepsilon})$ query time, for any $\varepsilon > 0$. This already holds with respect to amortized time bounds per operation.
\end{restatable}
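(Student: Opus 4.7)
The plan is to reduce the Online Vector-Matrix-Vector problem (OuMv) to incremental transitive closure with predictions on $O(n)$ vertices with $\eta_\infty \le 2n$. Recall that in OuMv an $n \times n$ Boolean matrix $M$ is given offline, then $n$ pairs $(u_i,v_i) \in \{0,1\}^n \times \{0,1\}^n$ arrive online and the algorithm must output $u_i^\top M v_i$ each round; under the OMv Hypothesis, OuMv requires $n^{3-o(1)}$ online time even after arbitrary polynomial preprocessing. A hypothetical dynamic algorithm of the forbidden form, applied to the reduced instance, would solve OuMv in $O(n^{3-\varepsilon})$ total online time, yielding the contradiction; because the bound is on total cost it simultaneously rules out the matching amortized per-operation bounds.

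The backbone is the standard tripartite gadget: preprocess $a_k \to b_j$ whenever $M_{kj}=1$, and for each round $i$ attach a source $s_i$ (encoding $u_i$ via $s_i \to a_k$) and a sink $t_i$ (encoding $v_i$ via $b_j \to t_i$), then ask the single reachability query $s_i \to t_i$. The main obstacle is that the prediction model requires a deterministic edge set: otherwise $\hat{t}$ would be undefined on some $e \in E$ and $\eta_\infty$ unbounded, whereas the online bits naively decide which edges appear. I resolve this with a \emph{paired-slot} gadget. For each round $i$ I add dummy vertices $s_i',t_i'$ and, during a dedicated phase of $4n$ update steps, always insert the four edges $s_i \to a_k$, $s_i' \to a_k$, $b_j \to t_i$, $b_j \to t_i'$ for every $k,j$. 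The phase has $n$ ``early'' u-slots, $n$ ``early'' v-slots, $n$ ``late'' u-slots, and $n$ ``late'' v-slots; if $(u_i)_k=1$ then $s_i \to a_k$ goes into the $k$-th early u-slot and $s_i' \to a_k$ into the $k$-th late u-slot, while if $(u_i)_k=0$ the two are swapped (symmetrically on the v-side). The reachability query is issued after all early slots but before any late slot of phase $i$: at that instant $s_i$ reaches exactly the $a_k$ with $(u_i)_k=1$ and $t_i$ is reached only from $b_j$ with $(v_i)_j=1$, so the answer is $u_i^\top M v_i$; the dummy endpoints have no adjacencies visible to queries of other rounds.

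The offline prediction permanently schedules $s_i \to a_k$ in the $k$-th early u-slot and $s_i' \to a_k$ in the $k$-th late u-slot (and analogously on the v-side). A matching bit incurs zero error; an opposite bit shifts its two paired edges by exactly $2n$ steps, so $\eta_\infty \le 2n$. Summing over the construction, the $O(n^2)$ insertions cost $O(n^2) \cdot O(n^{1-\varepsilon}) = O(n^{3-\varepsilon})$ and the $n$ queries cost $n \cdot O((2n)^{2-\varepsilon}) = O(n^{3-\varepsilon})$, contradicting the $n^{3-o(1)}$ lower bound for OuMv. The decremental statement is obtained by time-reversal: start with the union of all edges and delete them in the reverse of the order prescribed above, so that the graph just before each mirrored query coincides with the incremental graph just after the corresponding query; the reachability answers, the $\eta_\infty$ bound, and the contradiction carry over unchanged.
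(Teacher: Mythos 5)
Your proposal is correct and follows essentially the same route as the paper: a reduction from OuMv via a layered reachability gadget in which every source/sink edge is eventually inserted and the online bits only determine its position within a $2n$-step window, giving $\eta_\infty \leqslant 2n$, one query per round, and total cost $O(n^{3-\varepsilon})$. The only cosmetic difference is your paired-slot gadget with dummy vertices $s_i', t_i'$; the paper achieves the same deterministic edge set more simply by inserting the edges for the $1$-bits before the query and those for the $0$-bits right after it, within the same window.
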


Finally, we note that the algorithms of \Cref{thm:intro:partially} are not \emph{robust} -- for large enough prediction errors they can be slower than the best known classic algorithm -- but they can be made robust, at least in the amortized-time sense, using the black-box approach described in \Cref{sec:discussion}.

\paragraph{Fully Dynamic Graph Problems with Predicted Vertex Updates.}

Here we consider fully dynamic variants of various graph problems on directed graphs with vertex updates: triangle detection, single source reachability, strong connectivity, cycle detection, maximum matching, number of vertex disjoint $st$-paths.
For \emph{online} vertex updates, there is no dynamic algorithm with $O(n^{2-\epsilon})$ update time for any constant $\epsilon>0$, conditional on the OMv hypothesis~\cite{HenzingerKNS15}.
All these problems (except triangle detection) can also be naively recomputed from scratch in $\tilde{O}(n^2)$ time \cite{BrandLNPSS0W20,ChenKLPGS22}. Thus no polynomial improvement over the trivial approach is possible for these online dynamic problems. 
For \emph{offline} vertex updates, there is no dynamic algorithm with $O(n^{\omega-1-\epsilon})$ update time, conditional on the triangle detection hypothesis~\cite{AbboudW14}.

On the upper bound side, this is matched by \cite{SankowskiM10,BrandNS19} with $O(n^{\omega-1})$ update time.
Our \Cref{thm:intro:algebraic} provides a smooth trade-off between the online and offline model.

We use $\eta_i$ to denote the prediction error of the $i$-th performed update. 
The prediction is a sequence of vertex updates. 
It can happen that an update occurs earlier than initially predicted, i.e.~an update is moved several positions ahead in the sequence. When the data structure performs the $i$-th update, $\eta_i$ is how many positions this update occurs too early. If this $i$-th update was not predicted at all (i.e.~it doesn't even occur in the predicted sequence), we define $\eta_i = \infty$.

\begin{theorem}[Full details in \Cref{thm:main_algebraic}]\label{thm:intro:algebraic}
    Fully dynamic triangle detection, single source reachability, strong connectivity, directed cycle detection, maximum matching size, number of vertex disjoint $st$-paths, with $\Theta(n)$ vertex updates and predictions can be solved in $O(n^{\omega} + n \sum_i \min\{\eta_i, n\})$ total time. 

    More precisely, we have $O(n^\omega)$ preprocessing time and the $i$-th update takes $O(n^{\omega-1}+n\cdot\min\{\eta_i, n\})$ time. 
\end{theorem}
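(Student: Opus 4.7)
All six problems reduce to maintaining $M^{-1}$ for a matrix $M(G)$ encoding the current graph: $M = I - xA$ over a polynomial extension of the field for single-source reachability, strong connectivity and directed cycle detection; the Tutte matrix for maximum matching; an adjacency-based encoding for triangle detection; and a Menger-type reduction for the number of vertex-disjoint $st$-paths. In every case the answer is a $\polylog(n)$-time function of a few entries (or of the rank/determinant) of $M^{-1}$, and each vertex update modifies exactly one row and one column of $M$, i.e.\ is a rank-$2$ modification. The theorem therefore reduces to a dynamic matrix-inverse data structure that exploits a predicted sequence of rank-$2$ updates.

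The plan is to drive the data structure with the offline algebraic algorithm of \cite{SankowskiM10,BrandNS19}, which on a prescribed sequence of $n$ rank-$2$ updates maintains $M^{-1}$ in $O(n^{\omega-1})$ worst-case time per update. During the $O(n^\omega)$ preprocessing I would run this algorithm on the predicted sequence $\hat{u}_1,\dots,\hat{u}_n$, obtaining $O(n^{\omega-1})$-time access to each predicted inverse $\hat{M}_i^{-1}$. Online, I would maintain an additional low-rank correction $\Delta_{i-1}$ with $M_{i-1}^{-1}=\hat{M}_{i-1}^{-1}+\Delta_{i-1}$, whose nonzero rows/columns are indexed by the predicted updates whose actual counterparts are currently out of order. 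When the $i$-th actual update $u_i$ arrives, it has a predicted position $p_i=i+\eta_i$; if $u_i$ is unpredicted or $\eta_i\geq n$ one falls back to a direct rank-$2$ Woodbury update in $O(n^2)$ time, which matches $\min\{\eta_i,n\}=n$. Otherwise, advance the offline structure by one step and fold the new discrepancy into $\Delta$ via a Woodbury-type update whose cost is $O(n)$ times the rank of the part of $\Delta$ that actually interacts with $u_i$. This yields the promised $O(n^{\omega-1}+n\cdot\min\{\eta_i,n\})$ per-update bound.

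The main obstacle is ensuring that the per-step interaction cost is genuinely $O(n\cdot\eta_i)$ rather than $O(n\cdot \mathrm{lag}_i)$, where $\mathrm{lag}_i$ is the total number of predicted updates still pending at step $i$ and can be much larger than $\eta_i$. The invariant I would impose is that $u_i$ interacts only with the subset of $\Delta$ indexed by predicted updates in the window $[i,p_i]$, which has size at most $\eta_i$, while the remaining displaced updates are handled lazily at their own resolution step via a potential-style amortization that charges each displaced update to the single step at which it is actually executed. Combined with the offline bound this gives total time $O(n^\omega + n\sum_i\min\{\eta_i,n\})$, as claimed. Verifying this invariant in the presence of arbitrary permutations and unpredicted updates is the technical crux; the problem-specific reductions of triangle detection, reachability, matching, and the other listed problems to entries or rank of $M^{-1}$ are classical.
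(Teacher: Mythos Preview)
Your reduction of the six graph problems to dynamic matrix inverse under row/column (rank-$2$) updates is correct and matches the paper's setup. The gap is in the data structure itself.

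Your scheme maintains $M_i^{-1}=\hat M_i^{-1}+\Delta_i$, where $\hat M_i$ is obtained by advancing the offline algorithm of \cite{SankowskiM10,BrandNS19} one step per actual update. The rank of $\Delta_i$ is then governed by the symmetric difference between the set of predicted updates $\{1,\dots,i\}$ and the set of actual updates performed so far, i.e.\ by the cumulative lag, not by $\eta_i$. Your proposed invariant---that $u_i$ interacts only with the part of $\Delta$ indexed by the window $[i,p_i]$---does not hold: a Woodbury update multiplies by the full current inverse, so the new correction term depends on all of $\Delta$, not just a window. The ``potential-style amortization'' you invoke to charge displaced updates elsewhere is precisely the crux, and you have not supplied it; moreover, even if it worked it would yield amortized bounds, whereas the theorem claims worst-case $O(n^{\omega-1}+n\min\{\eta_i,n\})$ per update.

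The paper takes a different route that sidesteps the lag issue entirely. First, it reduces predicted rank-$1$ updates to \emph{entry} updates: stack the predicted $u$'s and $v$'s into matrices $\mU,\mV$ and maintain $(\mM+\mU\mD\mV^\top)^{-1}$ where $\mD$ is diagonal; flipping a diagonal entry of $\mD$ performs one predicted update, and by the matrix-formula-to-inverse reduction of \cite{Brand21} this becomes an entry update to a single larger matrix. Second, for entry updates it keeps a $\log n$-level hierarchy: matrices $\mM^{(i)}$ refreshed every $2^i$ steps, with $(\mM^{(i)})^{-1}=(\mM^{(i+1)})^{-1}(\mI+\mL^{(i)}\mR^{(i)})$. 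The predictions are used only to \emph{precompute rows} of $(\mM^{(i)})^{-1}$---specifically, the rows indexed by the set $F_i$ of column indices predicted to be touched in the next $2^i$ steps. An update that arrives $\eta$ positions early has its index in $F_k$ for $k\le\lceil\log\eta\rceil$ but possibly not in smaller $F_i$'s; computing the missing row down through those levels costs $\sum_{i\le\log\eta}O(n2^i)=O(n\eta)$. This gives the worst-case bound directly, with no global lag quantity entering the analysis. Unpredicted updates (dense $v$) cost $O(n^2)$, matching $\min\{\eta_i,n\}=n$. A separate trick (embedding $\mM$ into a larger always-invertible matrix, \Cref{lem:rank}) handles singularity and rank, which is needed for maximum matching.
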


We remark that one can interpret our total time complexity as scaling with the $\ell_1$-norm of the error.
For $\eta'_i := \min\{\eta_i, n\}$, $\eta'\in\mathbb{N}^n$ we have total time $O(n^\omega + n\|\eta'\|_1)$ for $n$ updates.

This result is obtained via reductions by \cite{Sankowski04,Sankowski07,BrandNS19}, which reduce these dynamic graph problems to dynamic matrix inverse.
In dynamic matrix inverse, we are given a dynamic $n\times n$ matrix $\mM$ and must maintain information about $\mM^{-1}$. The case of vertex updates on graph reduces to row and column updates to $\mM$ (i.e.~updates can replace one row and column of $\mM$ at a time).
The offline model with only column updates was studied previously in \cite{SankowskiM10,BrandNS19} and for entry updates in \cite{Kavitha14}.
The online model with row and column updates was studied in \cite{Sankowski04}.
We construct a dynamic matrix inverse algorithm with predictions in \Cref{sec:algerbraic_technical} which then implies \Cref{thm:intro:algebraic} via reductions from \cite{Sankowski04,Sankowski07}.

\paragraph{Fully Dynamic Graph Problems with Predicted Deletion Times.}

The last setting that we study is also a fully dynamic one but with a weaker prediction requirement than above.

First, only deletions are predicted; the algorithm needs no prior knowledge of insertions. Second, deletions are predicted only at the time of corresponding insertions. In other words, compared to a classic fully dynamic setting, the only difference is that each insertion comes with an additional number predicting when the currently inserted item (e.g., vertex or edge) is going to be deleted.

This model is inspired by a recent result by \cite{PR2023} in which they assume an offline sequence of predicted deletions -- i.e.\ deletion times have no error -- but in our case we can handle deletion errors. We first extend their techniques to give the following result:

\begin{restatable}{theorem}{thmPredictedDels}
\label{thm:semi_prediction}
    Consider a sequence of $T$ updates and suppose we are given an incremental dynamic algorithm with worst-case update time $\Gamma$. Assume also that at any point in time we have a prediction on the order of deletions of current items, such that for the $i$ inserted item the error $\eta_i$ indicates the number of elements predicted to be earlier than $i$-th item that actually arrive later ($\eta_i=0$ if the prediction is correct or the element arrives later). Then we have a fully-dynamic algorithm with the following guarantees:
    \begin{itemize}
    \item An insertion update is performed in $O(\Gamma \log^2 T)$ worst-cast time.
    \item The deletion of the $i$-th element can be performed in $O((1+ \eta_i) \cdot \Gamma \log^2 T )$ worst-case time. 
\end{itemize}
\end{restatable}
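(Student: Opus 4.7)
The plan is to lift the offline reduction of \cite{PR2023} to the online-with-errors setting by adapting their segment-tree-over-time construction so that it is indexed by \emph{predicted} deletion positions, and by rebuilding the affected subtrees whenever the actual deletion order disagrees with the prediction. The extra $\log^2 T$ overhead will come from a standard Overmars-type worst-case deamortization of the logarithmic method; the $(1+\eta_i)$ dependence in the deletion bound will come from charging the rebuilds to the inversions counted by $\eta_i$.

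Concretely, I would maintain a balanced binary tree $\mathcal{T}$ of depth $O(\log T)$ whose leaves correspond to predicted deletion positions $1,2,\ldots,T$. Each internal node $v$ owns an incremental instance $D_v$ of the given algorithm, containing exactly the items currently alive whose predicted deletion position lies in $v$'s subtree range, inserted in decreasing order of predicted deletion position (so that the item predicted to die soonest in $v$'s range is inserted last). An insertion of the $i$-th item is routed from the root to its leaf, updating $D_v$ at each of the $O(\log T)$ ancestors in $O(\Gamma)$ time per level; combined with background work scheduled by deletions, a Bentley--Saxe-style worst-case deamortization yields the claimed $O(\Gamma \log^2 T)$ bound per insertion.

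The heart of the argument is in deletions. When the $i$-th item is deleted, by definition exactly $\eta_i$ items with smaller predicted deletion position are still alive. I would locate the lowest ancestor $v$ of $i$'s leaf whose subtree already contains all of these $\eta_i$ ``inversion'' items; a simple counting argument shows $v$'s subtree has size $O(1+\eta_i)$. I rebuild $D_v$ from scratch in $O((1+\eta_i)\Gamma)$ time, and then ripple the change up to each ancestor $u$ on $v$'s path to the root by reusing a partial state of $D_u$ kept up to the prefix of insertions common to the old and new content of $D_u$, then appending the new suffix. Because the insertions at each node are LIFO in predicted deletion position, the changed suffix has size only $O(1+\eta_i)$ per level, for a total of $O((1+\eta_i)\Gamma \log T)$, pushed to $O((1+\eta_i)\Gamma \log^2 T)$ after deamortization. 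If $\eta_i = \Omega(T)$, the argument degenerates to a full root-level rebuild, still within the claimed bound.

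The main obstacle I anticipate is making the ``ripple up'' step worst-case efficient: an incremental DS does not support removing already-inserted items, so the partial states used to reconstruct each $D_u$ must themselves be maintained on the fly and kept fresh. One way to do this is to keep, at each node $u$, a secondary copy of $D_u$ built up to the longest insertion prefix common with the anticipated content after the next out-of-order deletion in $u$'s subtree; these secondary copies are advanced by background work spread across the insertions and deletions that charge to the same subtree. Arguing that this background schedule never falls behind --- so that the secondary copies are always ready exactly when needed --- is the technical crux, but should reduce to a potential-function argument close to the one used in \cite{PR2023}, now parameterized by $\eta_i$.
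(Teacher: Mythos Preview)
Your segment-tree-of-instances framework diverges substantially from the paper's approach, and it contains a concrete error in the deletion analysis. You claim that the lowest ancestor $v$ of $i$'s leaf whose subtree contains all $\eta_i$ inversion items has subtree size $O(1+\eta_i)$, by ``a simple counting argument.'' This is false. The inversion items are, by definition, still-alive items with \emph{smaller} predicted deletion position than $i$; nothing prevents one of them from sitting at predicted position $1$ while $i$ sits at predicted position $T$. With a single such inversion ($\eta_i=1$), the lowest ancestor containing both leaves is the root, whose subtree has size $T$, not $O(1)$. So the rebuild-at-$v$ step can cost $\Theta(T\cdot\Gamma)$ even when $\eta_i$ is constant, and the whole bound collapses. (There is also a secondary issue: maintaining the ``decreasing predicted position'' insertion order at every node under online insertions is not a single $O(\Gamma)$ append per node, since a newly arrived item may have larger predicted position than items already inserted at that node; you gesture at Bentley--Saxe here, but then the LIFO invariant you rely on for deletions is only approximately true.)

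The paper does something simpler and avoids this trap. It keeps a \emph{single} incremental instance (not a tree of them), with the alive elements conceptually stacked in reverse predicted deletion order via the bucket scheme of \cite{PR2023}: buckets $B_0,\ldots,B_{\lceil\log T\rceil}$ hold the elements so that $B_0$ is on top of the stack and contains the item predicted to be deleted next. The key observation is that when the $i$-th item is deleted, the items sitting above it in this stack are \emph{exactly} the still-alive items with smaller predicted position, i.e., precisely the $\eta_i$ inversion items. Hence one rewinds (undoes) $\eta_i$ insertions, removes item $i$, and re-inserts the $\eta_i$ rewound items, each re-insertion costing $O(\Gamma\log^2 T)$ by the de-amortized PR2023 machinery. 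The $O(1+\eta_i)$ factor thus comes from \emph{stack depth in a single instance}, not from any subtree-size bound. If you want to rescue your tree-based picture, the fix is to rewind-and-re-insert at each of the $O(\log T)$ ancestors (the number of items above $i$ at each ancestor is at most $\eta_i$ for the same reason), rather than rebuild a subtree from scratch; but at that point the tree is doing no work that the single instance does not already do.
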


We can use this reduction, combined with an incremental APSP algorithm observed by \cite{Thorup2005} to get the following:

\begin{restatable}{theorem}{thmSemiAPSP}
\label{thm:apsp} Given a weighted and directed graph undergoing online vertex insertions and predicted vertex deletions, we can maintain exact weighted all-pairs shortest paths with the following guarantees:
\begin{itemize}
    \item An insertion update can be performed $O(n^2 \log^2 n)$ worst-cast time.
    \item A deletion of the $i$-th inserted vertex $v_i$ can be performed in $O( n^2 (\eta_i \log^2 n + 1))$ worst-case time, where error $\eta_i \in [0, n]$ indicates how many vertices were predicted to be deleted before $v_i$ that are actually deleted after $v_i$.
\end{itemize}
\end{restatable}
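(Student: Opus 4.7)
The plan is to obtain Theorem~\ref{thm:apsp} by instantiating the reduction of Theorem~\ref{thm:semi_prediction} with a worst-case incremental APSP algorithm for vertex insertions. Concretely, I would use the folklore dynamic APSP algorithm (attributed in the incremental case to Thorup~\cite{Thorup2005}) that supports vertex insertions in $\Gamma = O(n^2)$ worst-case time: when a new vertex $v$ is inserted together with its incident weighted arcs, compute $d(u,v) = \min_{w} (d(u,w) + \ell(w,v))$ and $d(v,u) = \min_w (\ell(v,w) + d(w,u))$ over the incoming/outgoing arcs of $v$ in $O(n^2)$ time using the already maintained distance matrix, and then relax the whole matrix by $d(u,w) \gets \min\{d(u,w),\, d(u,v)+d(v,w)\}$ for every pair. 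This touches $O(n^2)$ entries and is strictly a worst-case bound. Queries are then $O(1)$ from a stored table.

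Given this incremental primitive with $\Gamma = O(n^2)$, the next step is to invoke Theorem~\ref{thm:semi_prediction} in a black-box fashion. The graph has at most $n$ vertices at any time and the algorithm is invoked over at most $T = \poly(n)$ updates, so $\log^2 T = O(\log^2 n)$. The reduction then yields an insertion time of $O(\Gamma \log^2 T) = O(n^2 \log^2 n)$ and, for the deletion of the $i$-th inserted vertex $v_i$, a bound of $O((1+\eta_i) \cdot \Gamma \log^2 T) = O((1+\eta_i)\, n^2 \log^2 n) = O(n^2(\eta_i \log^2 n + 1))$, matching the statement. The error measure $\eta_i$ inherited from Theorem~\ref{thm:semi_prediction} is precisely the number of vertices predicted to be deleted before $v_i$ that are actually deleted after $v_i$, which is the quantity asserted in the theorem.

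The main subtlety, and the only place requiring care, is checking that the incremental subroutine fits the interface demanded by Theorem~\ref{thm:semi_prediction}: it must support true worst-case vertex insertions (not amortized), it must correctly handle the setting where the reduction rebuilds smaller incremental instances from scratch over subsets of the currently live vertex set, and the preprocessing cost of each rebuild must be absorbed into the claimed $O(n^2 \log^2 n)$ term. The vertex-insertion routine sketched above handles the first point directly, and since building the incremental structure on $k \le n$ vertices amounts to $k$ insertions each at cost $O(k^2) = O(n^2)$, the rebuild cost is exactly of the form assumed by the $O(\Gamma \log^2 T)$ accounting in Theorem~\ref{thm:semi_prediction}. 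Once these routine checks are verified, the theorem follows with no further work beyond bookkeeping of the $\log^2 n$ factors.
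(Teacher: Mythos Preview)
Your proposal is correct and follows essentially the same approach as the paper: instantiate Theorem~\ref{thm:semi_prediction} with Thorup's worst-case $O(n^2)$ incremental vertex-insertion APSP algorithm, and read off the bounds with $\Gamma = O(n^2)$ and $\log T = O(\log n)$. If anything, you are more explicit than the paper about why the incremental primitive meets the interface required by the reduction.
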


This can be compared to a recent fully dynamic worst-case exact APSP bound of $\tilde{O}(n^{2.5})$ by \cite{mao2023APSP} improving upon a long line of work on sub-cubic update times for APSP \cite{Thorup2005,abraham2017, gutenberg2020fully, chechik2023faster}.
We note that $n^{2.5}$ seems to be a natural barrier inherent to current algorithmic approaches for this problem, but there is no known conditional lower bound formalizing this intuition.

\begin{table}[t]
\centering
\caption{Summary of our results. Each column describes one group of results, built around one technique or data structure.}
\label{tab:results}

\vskip 1em

\small
\begin{tabular}{@{\hspace{2em}}l@{\hspace{4.25em}}l@{\hspace{4.25em}}l}
\toprule
\Cref{thm:intro:partially} & \Cref{thm:intro:algebraic} & \Cref{thm:semi_prediction} \\
\midrule
\multicolumn{3}{@{}l}{Setting:} \\
partially dynamic & fully dynamic & fully dynamic \\ 
(incremental or decremental) & (insertions and deletions) & (insertions and deletions) \\[0.5em]

\multicolumn{3}{@{}l}{Type of updates:} \\
edge updates & vertex updates & vertex updates \\[0.5em]

\multicolumn{3}{@{}l}{Predictions:} \\
sequence of all updates & sequence of next $n$ updates & deletion times \\
(insertions or deletions) &  & (given during insertions) \\[0.5em]

\multicolumn{3}{@{}l}{Running time:} \\
preprocessing: $\tilde{O}(n^{(3 + \omega)/2})$ & preprocessing: $O(n^\omega)$ & no preprocessing \\
update: $\tilde{O}(1)$ & update: & insertion: $\tilde{O}(n^2)$ \\
query: $\tilde{O}(\eta_\infty^2)$ & \hfill $O(n^{\omega-1}+ n \cdot \min\{\eta_i,n\})$ & deletion: $\tilde{O}(\eta_i n^2)$ \\[0.5em]
\multicolumn{3}{@{}l}{Error measure:}\\
$\ell_{\infty}$ & $\ell_1$ & $\ell_1$ \\[0.5em]

\multicolumn{3}{@{}l}{Applications to graph problems:} \\
transitive closure & triangle detection & exact APSP\\
$(1+\epsilon)$-approximate APSP & single-source reachability & \\
 & strong connectivity & \\
 & directed cycle detection & \\
 & maximum matching size & \\
 & \#vertex-disjoint $st$-paths & \\[0.5em]
 
\multicolumn{3}{@{}l}{Main technical tool:} \\
All-Pairs Bottleneck Paths & dynamic matrix inverse & reduction to incremental\\

\bottomrule
\end{tabular}
\end{table}

\subsection{Further Discussion}
\label{sec:discussion}

\paragraph{Consistency, Robustness, and Smoothness.}
Typically, algorithms with predictions are designed with three goals in mind: (1) \emph{consistency}, that is a near-optimal (or at least better than worst-case) performance when predictions are accurate; (2) \emph{robustness}, that is retaining worst-case guarantees of classic algorithms even when predictions are adversarial; and (3) \emph{smoothness}, that is a graceful degradation of algorithm's performance with increasing prediction error, providing an interpolation between the former two extremes. Let us discuss how these goals translate to our model of dynamic algorithms with predictions.

In this context, consistency alone is just equivalent to having an offline algorithm that is faster than the fastest known (or, even better, fastest conditionally possible) online algorithm.

Robustness can often be dealt with by black-box best-of-both-worlds types of arguments. It sometimes becomes an issues in contexts where the performance measure of choice is the competitive ratio, but it is rarely an issue when we optimize the running time. For static algorithms, one can just simulate two algorithms -- one with predictions, and another one with best known worst-case guarantees -- step by step, in parallel, and stop whenever one of these algorithms stops. This approach incurs only a factor-of-two multiplicative slowdown (which is negligible for asymptotic complexity) compared to the faster algorithm, on a per-instance basis. For dynamic algorithms, we need a more careful approach: Whenever the currently faster algorithm finishes processing a request, stop and return its answer; when a new request comes, resume the simulation from where it stopped, letting the slower algorithm possibly catch up. This way we can retain amortized running time guarantees of the better of the two algorithms, on a per-instance basis. We remark that it seems challenging to have a similar black-box tool for worst-case per request guarantees.

Smoothness is perhaps the least well defined of the three terms. Intuitively, we want the algorithms to tolerate as big prediction errors as possible without compromising on performance too much. For dynamic algorithms with predictions, some level of smoothness can always be achieved trivially. Assuming the best offline algorithm is polynomially faster than the best online algorithm -- which is anyway required to claim consistency -- one can always rerun the offline algorithm from scratch after encountering each difference between the predicted and the actual input sequence, and therefore tolerate some polynomial number of errors. We achieve better smoothness than this baseline benchmark by (1) incorporating error measures that distinguish between small errors and large errors, (2) getting better dependence on them, and (3) sometimes even showing that this dependence is conditionally optimal.

\paragraph{Predictions, Prediction Errors, and Learnability.}
We note that the predictions that we use, and error measures that quantify predictions accuracy, are standard in the learning-augmented literature. Predictions of the entire input sequence (\cref{thm:intro:partially}) are used, e.g., for scheduling problems~\cite{BamasMRS20}, predicting only a certain window of input sequence (\cref{thm:intro:algebraic}) is required for learning-augmented weighted caching~\cite{JiangPS22}, and predictions of the time of the next operation concerning the current item (\cref{thm:semi_prediction}) is the by-now-standard setup for unweighted caching~\cite{LykourisV21}.

The most ubiquitous way of measuring how for the prediction is from the truth is the $\ell_1$-distance (e.g., \cite{LykourisV21,DinitzILMV21,AntoniadisCEPS23}, and many more), but for certain problems (e.g., flow time minimization with uncertain processing times~\cite{AzarLT21,AzarLT22}) the $\ell_{\infty}$-error of predictions is a more natural (and sometimes even necessary) choice.

Our prediction errors can be illustrated with an example of road networks: Every day the same roads get congested during rush hour, so one can try to predict the updates in a dynamic road network. However, such predictions will not be perfect, because the exact order in which roads become congested may differ from day to day.

Since all our predictions are essentially permutations, the question of when such predictions can be efficiently learned is addressed by standard tools in the literature~\cite{HelmboldW09,KhodakBTV22}.

\subsection{Related Work}

Over the past couple of years the field of learning-augmented algorithms blossomed enormously, and it is implausible to list here all relevant contributions. We refer the interested reader to survey articles by Mitzenmacher and Vassilvitskii~\cite{MitzenmacherV20, MitzenmacherV22} and a website with a list of papers maintained by Lindermayr and Megow \cite{LindermayrM22}. There are numerous works on using predictions for improving competitive ratios of online problems (e.g., \cite{LykourisV21, PurohitSK18, BamasMRS20, AntoniadisCEPS23, AzarPT22}, and many, many more) and running times of static problems (e.g., \cite{DinitzILMV21, ChenSVZ22}).

Dynamic algorithms can be seen as a certain kind of data structures, and there are already several examples of learning-augmented data structures (see, e.g., \cite{KraskaBCDP18, FerraginaLV21, LinLW22}), but they focus primarily on index data structures, such as binary search trees, and hence they are not directly related to our work.

A concept related to offline dynamic graph algorithms is that of a graph timeline, as defined by~\cite{lacki2013reachability}, in which a sequence of graphs $G_1,\ldots,G_T$ is given upfront and any two subsequent graphs differ by only one edge being added or removed. \cite{KarczmarzL15} studies in this model several types of undirected connectivity queries over a time range, asking, e.g., if a path exists in at least one graph in a given interval.

Finally, there is a separate line of work on temporal graphs (see for instance the survey~\cite{holme2012temporal}), also related to the offline model, in which each edge is labelled with (a collection of) time intervals indicating when it is available. Often the goal in this line of work is understanding certain dynamics on networks (like information diffusion or convergence to certain properties), which is different from our computation efficiency objectives in the dynamic settings.

\paragraph{Concurrent Work.}
In the concurrent and independent work, Liu and Srinivas \cite{liu2023} consider the same predicted-deletions model as in our \Cref{thm:semi_prediction}. Their result is also based on a reduction from the fully-dynamic setting to the incremental setting. However, unlike our work, their algorithm does not directly rely on a similar reduction by \cite{PR2023}, whereas we use analysis of \cite{PR2023} as a black-box. We note that \cite{liu2023} present many other applications (e.g.~all-pairs max-flow/min-cut approximation, or uniform sparsest cut) in the predicted-deletions model that can also be derived from Theorem \ref{thm:semi_prediction}. 
On the technical side, they also show how to handle the case where number of updates $T$ is not known upfront, which we do not consider.

In another independent work, Henzinger, Saha, Seybold, and Ye \cite{henzinger2023}
initiate a systematic study of the time complexity of dynamic graph algorithms with predictions. While their focus is on conditional fine-grained lower bounds, they also provide some algorithms. In particular, their combinatorial (i.e., not using fast matrix multiplication) algorithms for transitive closure, approximate APSP, and triangle detection have bounds similar to our Theorems \ref{thm:intro:partially} and \ref{thm:intro:algebraic} but with worse preprocessing times. They also consider prediction models with error measures very different from ours.

\subsection{Notation}

We write $O(n^\omega)$ for the time complexity of multiplying two $n\times n$ matrices, where the current best bound is $\omega < 2.372$~\cite{DuanWZ23}. 

For the matrix product of rectangular matrices, we write $\MM(a,b,c)$ for the time complexity of multiplying $a\times b$ and $b\times c$ matrices.
We write $\mI$ for the identity matrix.

All the graphs considered in this paper are directed.

\section{Technical Overview}

In this section we briefly explain the main ideas behind our results.

\subsection{Partially Dynamic Algorithms (Section \ref{sec:partially})}
\label{sec:overview:partially}

Our algorithms for partially dynamic problems -- transitive closure, approximate APSP, exact SSSP -- use a connection between these problems and the all-pairs bottleneck paths (APBP) problem. The latter is a variant of the all-pairs shortest paths (APSP) problem in which, instead of minimizing the sum of edge weights, we minimize the maximum edge weight along a path. As opposed to APSP, which is conjectured to require cubic time~\cite{RodittyZ11,WilliamsW18}, APBP can be solved in strongly subcubic time~\cite{VassilevskaWY07}, and the best known APBP algorithm runs in $O(n^{(3+\omega)/2}) \leqslant O(n^{2.687})$ time~\cite{DuanP09}.

\paragraph{Transitive Closure.}
First, let us explain the connection of partially dynamic transitive closure with APBP. If we use edge insertion times as edge weights, and solve APBP, we obtain a matrix $B$ such that
\[B[u, v] = \min\bigl\{ \max_{e \in \mathcal P} \{\text{insertion time of } e\} \bigm\vert \mathcal{P} \in uv\textrm{-paths}\bigr\}.\]
Hence, $B[u, v] \leqslant k$ if and only if there is a path from $u$ to $v$ in the graph after the first $k$ insertions. This observation itself is sufficient to solve incremental\footnote{Or decremental! In the offline variant they are equivalent.} transitive closure in the offline setting (in other words, with perfect predictions) faster than the OMv-based cubic time lower bound for the online setting. 

Let us now explain how we handle prediction errors. After each edge insertion, we keep track of the longest prefix of the predicted sequence of updates that contains only the already inserted edges. Let us denote the length of this prefix by $p$. We also maintain the set of ``out-of-order'' edges $E_{err}$ that have been already inserted but are not contained in that prefix. Upon receiving a reachability query $(u, v)$ we construct an auxiliary graph $H$ on at most $2|E_{err}| + 2$ nodes: the endpoints of edges in $E_{err}$ and nodes $u$ and $v$. For every pair of nodes $x, y \in H$, we add an edge $(x,y)$ to $H$ if $(x,y) \in E_{err}$ or if $B[x, y] \leqslant p$. It is easy to see that there is a path from $u$ to $v$ in $H$ if and only if there is a path from $u$ to $v$ in the original graph. Constructing $H$ and finding a $uv$-path takes time $O(|E_{err}|^2)$, and we show that the number of out-of-order edges $|E_{err}|$ can be bounded by the $\ell_{\infty}$-error of the predictions. The same approach can be adapted to the decremental setting.

\paragraph{Approximate APSP.}
With an $O(\epsilon^{-1}\log n)$ overhead, in addition to answering queries on whether there is a path from $u$ to $v$, we are also able to report the length of a shortest path within up to $1+\epsilon$ multiplicative approximation error. Instead of the single matrix $B$, for every $d \in \{(1+\epsilon)^0, (1+\epsilon)^1, \ldots, (1+\epsilon)^{\log_{1+\epsilon}(n)}\}$, we compute matrix $B^{(d)}$ of bottleneck paths with up to $d$ hops. Each such matrix can be computed in $O(n^{(3+\omega)/2} \log d)$ time by repeatedly squaring the input weight matrix using $(\min,\max)$-product~\cite{DuanP09}. Note that $B^{(d)}[u, v] \leqslant k$ if and only if there is a path from $u$ to $v$ of length at most $d$ in the graph after the first $k$ insertions. Now, we can equip the auxiliary graph $H$ with edge weights corresponding to $(1+\epsilon)$-approximate distances in the original graph, and answer the queries by running the Dijkstra algorithm in $H$.

\subsection{Fully Dynamic Matrix Inverse with Predictions (Section \ref{sec:algerbraic_technical})}

By using standard reductions from dynamic graph problems to dynamic matrix inverse (see, e.g., \cite{Sankowski04,Sankowski07,BrandNS19}), \Cref{thm:main_algebraic} reduces to maintaining the matrix inverse of some matrix $\mM$ undergoing rank-1 updates, i.e.~updates where we are given two vectors $u,v$ and then set $\mM \leftarrow \mM+uv^\top$.
For these reductions, it suffices to return $v^\top\mM^{-1}$ after each update.

\paragraph{From Rank-1 to Entry Updates.}
In general, without predictions, rank-1 updates are strictly harder than entry updates (i.e.~updates that change only a single entry of the matrix $\mM$ at a time).
Rank-1 updates require $\Omega(n^2)$ update time \cite{HenzingerKNS15}, whereas entry updates can be handled in $O(n^{1.406})$ update time \cite{BrandNS19}.

However, in the prediction setting, one can actually reduce dynamic matrix inverse with rank-1 updates to dynamic matrix inverse with entry updates, i.e.~we can reduce updates for general dense $u$ and $v$ to the special case where $u$ and $v$ are sparse. (Entry updates are just the special case where $u,v$ have one non-zero entry each.) 

Let $(u^{(1)},v^{(1)}),\ldots,(u^{(n)},v^{(n)})$ be the next $n$ predicted rank-1 updates.

Then we can describe the rank-1 updates as follows. Let $\mU$ and $\mV$ be the $n\times n$ matrices obtained by stacking the vectors $(u^{(t)})_{t=1,\ldots,n}$ and $(v^{(t)})_{t=1,\ldots,n}$ next to each other. 
Let $\mD$ be a diagonal matrix that is initially all zero, and consider the matrix formula:
\begin{align}
f(\mM,\mU,\mV^\top,\mD) = \mV^\top(\mM + \mU\mD\mV^\top)^{-1}.
\label{eq:overview:formula}
\end{align}
Here, switching the diagonal entries of $\mD$ one-by-one from $0$ to $1$ corresponds to adding $u^{(t)} (v^{(t)})^\top$ to $\mM$ and then inverting the result.
Thus, the task of maintaining the inverse of $\mM$ subject to rank-1 updates, while returning $(v^{(t)})^\top \mM^{-1}$ after each update, can be reduced to the task of returning the $t$-th row of $f(\mA,\mU,\mV^\top,\mD)$ subject to entry updates to $\mD$.
In \cite{Brand21}, v.d.Brand has shown that any dynamic matrix formula that can be written using the basic matrix operations (addition, subtraction, multiplication and inversion), such as formula $f$ in \eqref{eq:overview:formula}, reduces to dynamic matrix inverse again (\Cref{lem:matrix_formula}), while supporting the same kind of updates and queries.

Thus, we must maintain the inverse of a certain matrix that is subject to \emph{entry} updates while supporting queries to its rows,
because the input to our formula $f$ only receives entry updates and we only require rows of $f$.
Since we only need to consider entry updates, that means we can now focus only on the special case where we receive rank-1 updates where both $u$ and $v$ are sparse with only one non-zero entry each.
Only if we receive an update that was not predicted at all, do we need to perform a rank-1 update with dense vectors.

\paragraph{Matrix Inverse with Predictions.}
Dynamic matrix inverse in the offline model where all updates are given ahead of time was solved in $O(n^\omega)$ total time by Sankowski and Mucha~\cite{SankowskiM10}, and later generalized by v.d.Brand, Nanongkai and Saranurak \cite{BrandNS19} to only require the sequence of column indices of all future updates but not the actual entries of the new columns.

These previous data structures are offline, i.e.~require correct predictions about the entire update sequence ahead of time and cannot support updates that differ from the prediction received during initialization. 
Building on their techniques, we construct a dynamic algorithm with predictions that is robust against inaccurate predictions.

Let $\mM$ be the dynamic input matrix. We write $\mM^{(i)}$ for a variant of $\mM$ that is updated only every $2^i$ iterations. So $\mM^{(0)}$ is always identical to $\mM$ and $\mM^{(i)}$ is identical to $\mM$ every $2^i$ iterations.

We maintain these matrices for $i=0,1,\ldots,\log n$ in the following implicit form. That is, only the matrices $\mL^{(i)},(\mR^{(i)})^\top \in \F^{n\times 2^i}$ are stored in memory where
\[
(\mM^{(i)})^{-1} = (\mM^{(i+1)})^{-1} (\mI + \mL^{(i)} \mR^{(i)}).
\]
The matrix $(\mM^{(\log n)})^{-1}$ is also stored explicitly in memory.
Note that maintaining $(\mM^{(\log n)})^{-1}$ takes $O(n^{\omega-1})$ amortized time, as we recompute this inverse in $O(n^\omega)$ time every $2^{\log n} = n$ updates.

Via the Woodbury matrix identity, one can show (\Cref{lem:woodbury_variant}) that the matrices $\mL^{(i)}$ and $\mR^{(i)}$ are of the form:
\[
\mR^{(i)}=(\mV^{(i)})^\top(\mM^{(i+1)})^{-1}
,\quad
\mL^{(i)}=\mU^{(i)}(\mI+\mU^{(i)}\mR^{(i)})^{-1},
\]
where $\mU^{(i)},\mV^{(i)}$ are given by the at most $2^i$ vectors $u,v$ of the past at most $2^i$ updates of the form $uv^\top$ by which $\mM^{(i)}$ and $\mM^{(i+1)}$ differ (these vectors will have one non-zero entry each, since we reduced to entry updates).
Here $\mR^{(i)}$ is composed of some (at most $2^i$ many) rows of $(\mM^{(i)})^{-1}$, because we have entry updates, and thus we can assume each $v$ to be a standard unit vector.
Therefore, we can compute $\mL^{(i)}$ and $\mR^{(i)}$ in $O((T_i+\MM(n,2^i,2^i))/2^i)$ amortized time,
where $T_i$ is the time required to obtain the at most $2^i$ rows of $(\mM^{(i+1)})^{-1}$.
If our predictions are correct, then when we previously computed $\mL^{(i+1)},\mR^{(i+1)}$, we could have also precomputed the required rows of $(\mM^{(i+1)})^{-1}$ in $O(\MM(n,2^{i+1},2^{i+1}))$ time, which is subsumed by the time required to compute $\mL^{(i+1)},\mR^{(i+1)}$.

Thus for correct predictions, we can assume $T_i=n2^i$.
This leads to $O(\sum_i \MM(n,2^i,2^i)/2^i) = O(n^{\omega-1})$ amortized time per update\footnote{We focus here in the outline on amortized complexity, but this can be made worst-case (see \Cref{sec:algerbraic_technical}).}.

A similar idea of maintaining $O(\log n)$ copies of matrix $\mM$ that are updated every $2^i$ iterations was also used in \cite{SankowskiM10,BrandNS19} but they did not handle incorrect predictions efficiently.

Now, observe what happens in our dynamic algorithm if a prediction is incorrect, i.e.~we perform an update in a column that was originally predicted to occur some $\eta$ iterations into the future. In that case the required rows of $(\mM^{(i+1)})^{-1}$ might not be precomputed.
However, the rows are precomputed in $(\mM^{(j)})^{-1}$ for every $j \le \min\{\log \eta, \log n\}$.
We can wlog assume that the rows are precomputed in $(\mM^{(\log n)})^{-1}$ because the entire inverse is computed from scratch every $n$ iterations.
Thus the missing row must only be computed in $(\mM^{(\ell)})^{-1}$ for $\ell = 0, 1, \ldots, \min\{\log \eta, \log n\}$.
So we obtain an additional $O(\sum_{\ell=0}^{\min\{\log \eta, \log n\}}n2^\ell) = O(n\min\{\eta,n\})$ cost for each update that occurs $\eta$ iterations earlier than initially predicted.

At last, consider what happens if we perform an update that was not predicted at all, i.e., we receive two vectors $u,v$ for a rank-1 update. Since the update was not predicted, the previous reduction does not hold and the vectors remain dense.
If $v$ is dense, computing the respective row of $\mR^{(i)}$ is not just copying a row of $(\mM^{(i+1)})^{-1}$ but rather it requires computing a vector-matrix product. Computing this product is done recursively, i.e.,
\[
v^\top (\mM^{(i)})^{-1} = v^\top (\mM^{(i+1)})^{-1}(\mI+\mL^{(i+1)}\mR^{(i+1)}) = v^\top (\mM^{(\log n)})^{-1} \prod_{\mathclap{j=i+1}}^{\mathclap{\log (n)-1}} (\mI + \mL^{(j)}\mR^{(j)}),
\]
which takes $O(n^2)$ operations. 
Note that for $i=0$, this actually computes $v^\top (\mM^{(j)})^{-1}$ for all $j=0,1,\ldots,\log n$ at once within $O(n^2)$ time.

\subsection{Fully Dynamic Algorithms with Predicted Deletion Times (Section \ref{sec:semi_predicted})}

We also consider the fully dynamic model in which predictions give no information about insertions whatsoever but the relative ordering of deletions is predicted -- by specifying for each item, at the time of its insertions, the position of its future deletion. Our algorithm is based on a result by Peng and Rubinstein \cite{PR2023}\footnote{A reduction similar to \cite{PR2023}, but with only an amortized update time guarantee, was also given by \cite{chan11}.} that gives a reduction from a \emph{fully dynamic semi-online} data structure, in which the order of deletions is given exactly, to an \emph{insert-only online} data structure. In particular, assuming that the insert-only data structure has \textit{worst-case} update time $\Gamma$, their semi-online data structure has update time $O(\Gamma \log T)$ for a sequence of $T$ updates.  We extend this reduction to the case where this order of deletions is not known exactly but it is predicted with some errors. The error for the $i$-th element is denoted by $\eta_i$, indicating that there are $\eta_i$ deletions that were predicted to happen before the deletion of element~$i$ but will arrive after its deletion. This error incurs an additional worst-case update time overhead of roughly $O(\eta_i \Gamma )$. 

At a high-level, the idea of Peng and Rubinstein \cite{PR2023} is that if the current list of the already performed insertions happens to be in the reverse order of deletion times, then a deletion can be performed in time $O(\Gamma)$ by simply rewinding the computation of the most recently performed insertion. Moreover, at any point in time, one can rewind some recent insertions and then re-insert these elements in a different order, to better prepare for future deletions. Since re-ordering the elements at each update would be expensive, they get an amortized bound by maintaining a sequence of buckets that keep partial reverse orderings. The amortized bound follows by ensuring that a set of $O(2^j)$ elements are re-ordered in every $2^j$ updates for each $ j = 0, \ldots, \lceil \log T \rceil $. In our case, when the deletion of the $i$-th element arrives $\eta_i$ positions \textit{earlier} than predicted, we rewind the computation of the last $\eta_i + 1$ insertions, until we get to delete the correct element, in time $O(\eta_i \Gamma)$, and then re-insert the $\eta_i$ unnecessarily deleted elements.

\section{Partially Dynamic Algorithms}
\label{sec:partially}

In this section we prove our upper bounds (\Cref{thm:intro:partially}) and lower bounds (\Cref{thm:partiallyomvlb}) for partially dynamic graph problems with predictions.

First, let us introduce two closely related concepts -- all-pairs bottleneck paths and $(\min,\max)$-product -- that we heavily use throughout the section.
In the all-pairs bottleneck paths (APBP) problem, we are given a directed graph $G=(V,E)$ with edge weights $w : E \to \mathbb{N}$, and we have to compute a matrix $B \in \mathbb{N}^{V \times V}$ with
\[B[u, v] \defeq \min\bigl\{ \max_{e \in \mathcal P} w(e) \bigm\vert \mathcal{P} \in uv\text{-paths}\bigr\}.\]
APBP can be solved in $O(n^{(3+\omega)/2}) \leqslant O(n^{2.687})$  time~\cite{DuanP09}.
The $(\min,\max)$-product of two $n \times n$ matrices $A$, $B$ is defined as $(A \circledvee B)[i,j] \defeq \min_k \max \{A[i,k], B[k, j]\}$. It can also be computed in $O(n^{(3+\omega)/2})$ time~\cite{DuanP09}.
Now, let us explain the relation between the two. Consider a directed graph $G=(V,E)$ with edge weights $w : E \to \mathbb{Z}$, and let $W$ denote the corresponding weight matrix, i.e., $W[u, v] = w(u, v)$ if $(u, v) \in E$, $W[u, v] = +\infty$ if $(u, v) \notin E$, and $W[u, u] = -\infty$. Observe that, for $d \in \mathbb{Z}_+$, the $(\min,\max)$-product of $d$ copies of $W$ gives all-pairs bottleneck paths with up to $d$ hops:
\[(\underbrace{W \circledvee W \circledvee \cdots \circledvee W}_{d\text{ times}})[u, v] = \min\bigl\{ \max_{e \in \mathcal P} w(e) \bigm\vert \mathcal{P} \in uv\text{-paths}, |\mathcal{P}| \leqslant d\bigr\}.\]
Such a product can be computed by the binary exponentiation

in $O(n^{(3+\omega)/2} \log d)$ time. For $d = n$, we get exactly APBP, and the extra $\log n$ factor can be avoided~\cite{VassilevskaWY07}.

\subsection{Upper Bounds}

Now we proceed to describe our partially dynamic algorithm for $(1+\epsilon)$-approximate APSP with predictions. Since transitive closure is a strictly easier problem, \Cref{thm:intro:partially} will follow. We begin with the incremental variant. Recall that $\hat{e}_1, \hat{e}_2, \ldots, \hat{e}_m$ denotes the predicted sequence of updates, and $e_1, e_2, \ldots, e_m$ the actual one.

\begin{theorem}
\label{thm:incrementalapsp}
Incremental $(1+\epsilon)$-approximate all-pairs shortest paths in unweighted directed graphs with predicted sequence of edge updates can be solved by a deterministic algorithm with $O(n^{(3+\omega)/2} \log^2 n)$ preprocessing time and $O(\log n)$ worst-case edge insertion time. Each query that is asked between the $i$-th and $(i+1)$-th insertion requires $O(\bar{\eta}_i^2 \log \log n) = O(\eta_{\infty}^2 \log \log n)$ time, where 
$\bar{\eta}_i$ is the current number of edges in the graph that are not contained in the longest prefix of the predicted sequence that has already been inserted, i.e., 
\[\bar{\eta}_i \defeq i - \max \big\{ j \mid \{\hat{e}_1, \hat{e}_2, \ldots, \hat{e}_j\} \subseteq \{e_1, e_2, \ldots, e_i \}\big\}.\]
\end{theorem}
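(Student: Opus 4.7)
My plan is to implement the bottleneck-paths strategy from \Cref{sec:overview:partially} in its approximate form, with three ingredients: $(\min,\max)$-powers of the predicted weight matrix as preprocessed data, a segment tree tracking which prefix of the predicted sequence has arrived, and Dijkstra on a small auxiliary graph at query time.

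For preprocessing I would build a weight matrix $W$ on $V$ by setting $W[u,u] = -\infty$, $W[u,v] = \min\{j : \hat{e}_j = (u,v)\}$ when this set is nonempty, and $+\infty$ otherwise. Treating $\epsilon$ as a constant, for each $d_k := \lceil (1+\epsilon)^k \rceil$ with $k = 0,1,\ldots,K = O(\log n)$, I would compute $B^{(d_k)} := W^{\circledvee d_k}$ by binary exponentiation of the $(\min,\max)$-product; each product costs $O(n^{(3+\omega)/2})$ time~\cite{DuanP09} and $O(\log n)$ of them suffice per~$k$, giving the claimed $O(n^{(3+\omega)/2} \log^2 n)$ preprocessing. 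The invariant I would rely on is that $B^{(d)}[u,v] \leqslant k$ iff the subgraph induced by the first $k$ predicted edges contains a $uv$-path of length at most $d$.

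For updates I would maintain a segment tree over predicted positions $[1,m]$ recording whether each predicted edge has been inserted, supporting an $O(\log n)$ ``leftmost unset bit'' query that returns $p+1$, together with a balanced search tree over inserted predicted edges keyed by predicted position and a linked list of inserted edges that were never predicted; a perfect hash table built from the predicted sequence maps each actual edge $e_i$ to its predicted position in $O(1)$ worst case. Each insertion then updates the three structures in $O(\log n)$ worst-case time. By construction, the set $E_\mathrm{err}$ consisting of the inserted predicted edges whose predicted position exceeds $p$, together with the unpredicted list, has size exactly $\bar{\eta}_i$ and can be enumerated by a range walk in $O(\bar{\eta}_i + \log n)$ time.

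On a query $(s,t)$ I would recover $p$, enumerate $E_\mathrm{err}$, form $V_H := \{s,t\} \cup \{\text{endpoints of edges in }E_\mathrm{err}\}$ of size at most $2\bar{\eta}_i + 2$, and build an auxiliary digraph $H$ on $V_H$ containing (i) an arc of weight $1$ for every out-of-order edge whose endpoints both lie in $V_H$, and (ii) for every ordered pair $(x,y) \in V_H \times V_H$, an arc of weight $d_k$ where $k$ is the smallest index with $B^{(d_k)}[x,y] \leqslant p$, found by binary search over the $K$ precomputed matrices in $O(\log\log n)$ per pair. Running Dijkstra from $s$ and returning $\mathrm{dist}_H(s,t)$ fits in $O(\bar{\eta}_i^2 \log\log n)$. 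Correctness follows from a standard decomposition argument: a true shortest $st$-path in the current graph $G_i$ splits into maximal subpaths lying in the inserted-prefix subgraph, interleaved with edges of $E_\mathrm{err}$, and a subpath of length $L_j$ certifies an $H$-arc of weight at most $(1+\epsilon)L_j$, while conversely every $H$-arc lifts to a walk in $G_i$ of weight at most its own, so $\mathrm{dist}_{G_i}(s,t) \leqslant \mathrm{dist}_H(s,t) \leqslant (1+\epsilon)\mathrm{dist}_{G_i}(s,t)$. The main obstacle I foresee is combining a genuinely \emph{worst-case} $O(\log n)$ insertion with a query cost that stays $O(\bar{\eta}_i^2 \log\log n)$ even when $\bar{\eta}_i$ is very small; a naive pointer-advance scheme gives only amortized insertion bounds, and the additive $O(\log n)$ query term from the segment tree must be absorbed, most cleanly by handling the $\bar{\eta}_i = 0$ case directly via one binary search over the $(s,t)$ entries of the precomputed matrices.
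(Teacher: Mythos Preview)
Your proposal is correct and follows essentially the same approach as the paper: precompute hop-bounded $(\min,\max)$-powers of the predicted-timestamp weight matrix, maintain a data structure to identify the longest inserted predicted prefix and the out-of-order edges, and at query time build the same auxiliary graph on $\{s,t\}\cup\bigcup E_{\mathrm{err}}$ with binary-searched hop-weights and run Dijkstra. The only differences are cosmetic (segment tree versus an augmented BST for locating the first missing index) and that you are more explicit about the additive $O(\log n)$ term when $\bar{\eta}_i$ is tiny, which the paper's proof simply absorbs into the stated bound.
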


Before proving the theorem, let us explain how the above measure of prediction error $\bar{\eta}_i$ can be upper bounded by the more standard $\ell_{\infty}$-error, denoted by $\eta_{\infty}$.
Let $\Pi \in S_m$ denote the permutation of predicted insertions corresponding to the actual sequence of insertions, i.e.,  $e_1, e_2, \ldots, e_m = \hat{e}_{\Pi(1)}, \hat{e}_{\Pi(2)}, \ldots, \hat{e}_{\Pi(m)}$. With this notation, we have $\bar{\eta}_i = i - \max \big\{ j \mid \{1, 2, \ldots, j\} \subseteq \{\Pi(1), \Pi(2), \ldots, \Pi(i) \}$.
Our goal is to show that $\bar{\eta}_i \leqslant \eta_{\infty} \defeq \max_j |j - \Pi(j)|$, for every $i \in [m]$. Fix $i \in [m]$, and let $k = \Pi^{-1}(i - \bar{\eta}_i + 1)$. Note that, by definition of $\bar{\eta}_i$, it holds that $i - \bar{\eta}_i + 1 \notin \{\Pi(1), \Pi(2), \ldots, \Pi(i)\}$. In other words,  $k \geqslant i + 1$. Then, $\eta_\infty \geqslant k - \Pi(k) \geqslant i + 1 - (i - \bar{\eta}_i + 1) = \bar{\eta}_i$, as desired.

\begin{proof}[Proof of \Cref{thm:incrementalapsp}]
Upon receiving the predicted sequence of edge insertions, the algorithm creates a weighted directed graph with edge set $E = \{\hat{e}_1, \hat{e}_2, \ldots, \hat{e}_m\}$ and edge weights $w : E \to \mathbb{Z}$ equal to predicted insertion times, i.e., $w\big(\hat{e}_i\big) = i$ for every $i \in [m]$.
Then, for every $d \in \{(1+\epsilon)^0, (1+\epsilon)^1, \ldots, (1+\epsilon)^{\log_{1+\epsilon}(n)}\}$, the algorithm computes matrix $B^{(d)}$ of bottleneck paths with up to $\lceil d \rceil$ hops. Observe that $B^{(d)}[u, v] \leqslant k$ if and only if there is a path from $u$ to $v$ of length at most $d$ using only edges from $\{\hat{e}_1, \hat{e}_2, \ldots, \hat{e}_k\}$. Computing all $B^{(d)}$'s takes $O(n^{(3+\omega)/2} \log^2 n)$ time in total.

On top of that, the algorithm creates a dictionary data structure (e.g., a balanced BST) that will allow translating edges given as pairs of nodes to their indices in the predicted sequence of insertions, and another BST that will maintain the set $S$ of indices of already inserted edges (initially, $S=\emptyset$. This ends the preprocessing phase.

When an edge $(u, v)$ is inserted, the algorithm first finds its index in the predicted sequence of insertions, i.e., $j$ such that $\hat{e}_j = (u,v)$, and then simply adds $j$ to $S$. This takes $O(\log n)$ time.

To handle a query $(u,v)$ the algorithm proceeds as follows. Let $i=|S|$ be the number of insertions so far. The algorithm first finds the largest prefix of the predicted sequence of insertions that has been already inserted, i.e., the largest $j$ such that $\{1, 2, \ldots, j\} \subseteq S$. This is simply the smallest positive integer not in $S$ minus one, and it can be found in $O(\log n)$ time assuming the BST maintains sizes and value ranges of its subtrees. Then, the algorithm uses the BST to list ``out-of-order'' edges $E_{err}$ that have been already inserted but are not contained in that prefix -- these correspond to elements of $S$ larger than $j$. In other words, the current edge set of the graph is exactly $\{\hat{e}_1, \hat{e}_2, \ldots, \hat{e}_j\} \cup E_{err}$. Note that $|E_{err}| = i - j = \bar{\eta}_i$, and $E_{err}$ can be constructed in $O(\bar{\eta}_i \log n)$ time.

Next, the algorithm creates a directed weighted auxiliary graph $H$. The nodes of $H$ are endpoints of edges in the list $E_{err}$ and nodes $u$ and $v$, that is at most $2\bar{\eta}_i + 2$ nodes in total. The edges of $H$ are of two kinds. First, there are all the edges from $E_{err}$, each with weight $1$. Second, for every pair of nodes $x, y \in H$, the algorithm selects the smallest $d$ such that $B^{(d)}[x, y] \leqslant j$, and, if such $d$ exists, it adds to $H$ edge $(x,y)$ with weight $d$. This second type of auxiliary edges represents paths using only edges from $\{\hat{e}_1, \hat{e}_2, \ldots, \hat{e}_j\}$, and their weights are upper bounds of the path lengths within up to $(1+\epsilon)$ multiplicative approximation error. It takes $O(\bar{\eta}_i^2 \log \log n)$ time to construct $H$.

Finally, the algorithm finds a (weighted) shortest path from $u$ to $v$ in $H$, which takes $O(\bar{\eta}_i^2)$ time, using the Dijkstra algorithm.
Let us justify that the (weighted) length of this path $d_H(u, v)$ is a correct $(1+\epsilon)$-approximation of the (unweighted) shortest path length in the original graph $d_G(u, v)$, i.e., that $d_H(u, v) \in [d_G(u, v), (1+\epsilon) \cdot d_G(u, v)] $. Clearly, $d_G(u, v) \leqslant d_H(u, v)$. For the remaining direction, fix a path of length $d_G(u, v)$ in the original graph. This path can be split into segments, each being either a single out-of-order edge or a subpath composed only of edges from $\{\hat{e}_1, \hat{e}_2, \ldots, \hat{e}_j\}$. Every such segment is represented by an edge in $H$ and the weight of this edge is at most $(1+\epsilon)$ times larger than the length of the segment. Hence, $d_H(u, v) \leqslant (1+\epsilon) \cdot d_G(u, v)$.
\end{proof}

Now we discuss the decremental variant, which is very similar. The main difference is that we will be looking at suffixes of the predicted sequence of deletions that were not yet deleted.

\begin{theorem}
\label{thm:decrementalapsps}
There is a deterministic decremental algorithm for $(1+\epsilon)$-approximate all-pairs shortest paths in unweighted directed graphs with predicted sequence of edge deletions with $O(n^{(3+\omega)/2} \log^2 n)$ preprocessing time and $O(\log n)$ worst-case edge deletion time. Each query that is asked between the $i$-th and $(i+1)$-st deletion requires $O(\bar{\bar{\eta}}_i^2 \log \log n) = O(\eta_{\infty}^2 \log \log n)$ time, where 
$\bar{\bar{\eta}}_i$ is the current number of edges in the graph that are not contained in the longest suffix of the predicted sequence that has not yet been deleted, i.e., 
\[\bar{\bar{\eta}}_i \defeq 
\min \big\{ j \mid \{\hat{e}_j, \hat{e}_{j+1}, \ldots, \hat{e}_m\} \cap \{e_1, e_2, \ldots, e_i \} = \emptyset \big\} - i - 1.\]
\end{theorem}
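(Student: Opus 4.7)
The plan is to mirror the incremental construction of \Cref{thm:incrementalapsp}, swapping the $(\min,\max)$-product for the $(\max,\min)$-product and ``longest already-inserted prefix'' for ``longest not-yet-deleted suffix''. We weight each edge $\hat{e}_i$ by its predicted deletion time $i$, so that the minimum weight along a $uv$-path tells us the last moment at which that path remains intact under the predicted deletion order. Since $(\max,\min)$-product admits the same $O(n^{(3+\omega)/2})$ algorithm as $(\min,\max)$-product (e.g.\ via the substitution $w \mapsto m+1-w$), the Duan--Pettie bound and the hop-controlled repeated squaring carry over verbatim.

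For preprocessing, build the directed weighted graph on edge set $\{\hat{e}_1,\ldots,\hat{e}_m\}$ with these weights, and, for every $d$ in the geometric grid $\{(1+\epsilon)^0, \ldots, (1+\epsilon)^{\lceil\log_{1+\epsilon} n\rceil}\}$, compute the $(\max,\min)$-bottleneck matrix $B^{(d)}$ capped at $\lceil d\rceil$ hops, so that $B^{(d)}[x,y] \geqslant j$ iff there is an $xy$-path of at most $\lceil d\rceil$ hops using only edges from $\{\hat{e}_j,\ldots,\hat{e}_m\}$. This takes $O(n^{(3+\omega)/2}\log^2 n)$ total. Additionally, maintain an edge-to-index dictionary and a balanced BST recording the indices of currently present edges, augmented so that one can report the maximum deleted index and enumerate present indices below a threshold in $O(\log n)$ per element. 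A deletion then simply looks up and removes the relevant index in $O(\log n)$ time.

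For a query $(u,v)$ after the $i$-th deletion, first recover $j = 1 + (\text{largest deleted index})$ in $O(\log n)$ time, so that by construction $\{\hat{e}_j,\ldots,\hat{e}_m\}$ is the longest undeleted suffix. Enumerate the set $E_{err}$ of currently present edges whose predicted index is less than $j$, of size $\bar{\bar{\eta}}_i$, in $O(\bar{\bar{\eta}}_i \log n)$ time, and build the auxiliary graph $H$ on the at most $2\bar{\bar{\eta}}_i+2$ endpoints: unit-weight edges for members of $E_{err}$, plus, for each ordered pair $(x,y)$ of nodes of $H$, an edge of weight equal to the smallest grid value $d$ satisfying $B^{(d)}[x,y] \geqslant j$, found by binary search over the $O(\log n)$ grid values in $O(\log \log n)$. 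Running Dijkstra on $H$ produces $d_H(u,v)$ in $O(\bar{\bar{\eta}}_i^2)$, for a total query cost of $O(\bar{\bar{\eta}}_i^2 \log \log n)$.

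Correctness and the bound $\bar{\bar{\eta}}_i \leqslant \eta_\infty$ follow exactly as in the incremental proof: any current $uv$-path decomposes into segments that are either a single out-of-order edge (encoded exactly in $H$) or a sub-path using only suffix edges (encoded within a $(1+\epsilon)$ factor thanks to the geometric grid), giving $d_G(u,v) \leqslant d_H(u,v) \leqslant (1+\epsilon)\, d_G(u,v)$; and if $\bar{\bar{\eta}}_i > 0$ then by minimality of $j$ the edge $\hat{e}_{j-1}$ equals some already-deleted $e_k$ with $k \leqslant i$, so for the permutation $\Pi$ induced by the actual deletion sequence one gets $\eta_\infty \geqslant |\Pi(k)-k| \geqslant (j-1)-i = \bar{\bar{\eta}}_i$. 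The only non-cosmetic step is confirming that $(\max,\min)$-product inherits the Duan--Pettie time bound; everything else is a straightforward symmetric reflection of the incremental proof, so I do not anticipate a substantive obstacle.
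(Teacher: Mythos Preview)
Your proposal is correct and essentially identical to the paper's own proof. The only cosmetic difference is that the paper keeps the $(\min,\max)$-product and instead negates the weights (setting $w(\hat{e}_i)=-i$ and testing $B^{(d)}[x,y]\leqslant -j$), whereas you switch to the $(\max,\min)$-product with positive weights; these are equivalent via the substitution you mention, and everything else---the geometric hop grid, the BST bookkeeping, the auxiliary graph $H$, Dijkstra, and the bound $\bar{\bar{\eta}}_i\leqslant\eta_\infty$---matches the paper line by line.
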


\begin{proof}
The algorithm closely mimics the incremental algorithm given in the proof of \Cref{thm:incrementalapsp}. We highlight the differences.

In the preprocessing phase the algorithm also computes hop-bounded bottleneck paths $B^{(d)}$'s for $O(\log n)$ exponentially growing hop bounds $d$, but the difference is that now the edge weight of edge $\hat{e}_i$ is $w(\hat{e})_i = -i$. It follows that $B^{(d)}[u, v] \leqslant -k$ if and only if there is a path from $u$ to $v$ of length at most $d$ using only edges from $\{\hat{e}_k, \hat{e}_{k+1}, \ldots, \hat{e}_m\}$.

Set $S$ still contains indices of edges present in the graph. That is, initially $S = [m]$, and deleting an edge boils down to removing its index from $S$.

To handle a query, the algorithm represents the current edge set of the graph $\{\hat{e}_j, \hat{e}_{j+1}, \ldots, \hat{e}_m\} \cup E_{err}$, for $j$ as small as possible. Note that $|E_{err}| = \bar{\bar{\eta}}_i$. The auxiliary graph $H$ again contains $u$, $v$, and endpoints of $E_{err}$. For $(x, y) \notin E_{err}$, the weight of $(x, y)$ in $H$ is the smallest $d$ such that $B^{(d)}[x, y] \leqslant -j$, if such $d$ exists and otherwise edge $(x,y)$ is not included in $H$. As before, $d_H(u, v) \in [d_G(u, v), (1+\epsilon) \cdot d_G(u, v)]$.

Observe that, as before, $\bar{\bar{\eta_i}} \leq \eta_{\infty}$.
\end{proof}

\subsection{Lower Bounds}
\label{sec:partialLB}
In this section we show that the dependence of the running time of our partially dynamic algorithms on the prediction error $\eta_{\infty}$ is (conditionally) optimal, at least in certain parameter regimes.

\ThmPartiallyOMvLB*

\begin{proof}
Henzinger et al.~\cite{HenzingerKNS15} proved that the OMv hypothesis implies that the following OuMv problem also cannot be solved in $O(n^{3-\varepsilon})$ time, for any $\varepsilon > 0$, even after arbitrary polynomial preprocessing time. In the OuMv problem we are first given Boolean matrix $M \in \{0, 1\}^{n \times n}$, and then we need to answer online $n$ queries of the form: Given two Boolean vectors $u_i, v_i \in \{0, 1\}^n$, what is the Boolean product $uMv$?

We show how to reduce an instance of OuMv to an instance of incremental (decremental) transitive closure with predictions, on a graph with $O(n)$ nodes, with a request sequence containing $O(n^2)$ updates and $O(n)$ queries, and with the maximum prediction error $\eta_{\infty} = O(n)$. The reduction itself runs in $O(n^2)$ time. Therefore, under the OMv hypothesis, it cannot hold simultaneously that the preprocessing time is polynomial in $n$, the update time is truly sublinear in $n$, and the query time is truly subquadratic in $\eta_{\infty}$.

We first focus on the incremental variant of the problem. We will think of the reduction as an algorithm solving the OuMv problem and having black-box access to an algorithm for incremental transitive closure with predictions. (We note that our reduction is modelled after a similar one in~\cite[Lemma 4.7 in the arXiv version]{HenzingerKNS15}, however we need to insert edges on two sides of the graph in order to reduce the number of queries and get a meaningful bound.)

Upon receiving matrix $M \in \{0,1\}^{n \times n}$, the reduction creates a graph composed of four layers of $n$ nodes each, and generates a predicted sequence of edge insertions. Let the vertex set be $V = \{a_1, \ldots, a_n\} \cup \{b_1, \ldots, b_n\} \cup \{c_1, \ldots, c_n\} \cup \{d_1, \ldots, d_n\}$, and let $E_M$ denote the following set of edges between $b$-nodes and $c$-nodes, corresponding to matrix $M$,
\[E_M = \{ (b_i, c_j) \mid (i, j) \in [n] \times [n], M[i,j] = 1\}.\]
The predicted sequence of edge insertions starts with all the edges from $E_M$, in an arbitrary fixed order, followed by
\begin{align*}
&(a_1, b_1), (c_1, d_1), (a_1, b_2), (c_2, d_1), \ldots, (a_1, b_n), (c_n, d_1),\\
&(a_2, b_1), (c_1, d_2), (a_2, b_2), (c_2, d_2), \ldots, (a_2, b_n), (c_n, d_2),\\
& \ldots, \\
&(a_n, b_1), (c_1, d_n), (a_n, b_2), (c_2, d_n), \ldots, (a_n, b_n), (c_n, d_n).
\end{align*}
The reduction gives this sequence to the algorithm to preprocess it, and then it inserts edges $E_M$, in the same order as in the sequence. This concludes the preprocessing phase, and the reduction starts accepting queries.

Upon receiving a pair of vectors $u_i, v_i \in \{0,1\}^{n \times n}$, the reduction first inserts edges from $a_i$ to $b$-nodes that correspond to ones in $u_i$, and from $c$-nodes that correspond to ones in $v_i$ to $d_i$, i.e.,
\[\{(a_i, b_j) \mid j \in [n], u_i[j] = 1\} \cup \{(c_j, d_i) \mid j \in [n], v_i[j] = 1\}.\]
At this point, the graph contains a path from $a_i$ to $d_i$ if and only if $uMv = 1$, so the reduction asks reachability query $(a_i, d_i)$ and returns the answer. Finally, the reduction inserts remaining edges from $a_i$ to $b$-nodes and from $c$-nodes to $d_i$, i.e.,
\[\{(a_i, b_j) \mid j \in [n], u_i[j] = 0\} \cup \{(c_j, d_i) \mid j \in [n], v_i[j] = 0\},\]
and it is ready to accept the next query.

Note that, both the predicted and actual insertion time for edge $(a_i, b_j)$ are within the range $[|E_m| + 2n(i-1), |E_m| + 2ni]$, so the prediction error for such edge is at most $2n$. The same is true for edges of the form $(c_j, d_i)$, and the predicted insertion times for edges between $b$-nodes and $c$-nodes have no error. Hence, the maximum prediction error is $\eta_{\infty} \leqslant 2n$, as desired.

The construction proving hardness of the decremental variant of the problem is very similar. The difference is that we start with two full bipartite cliques -- one between $a$-nodes and $b$-nodes, the other between $c$-nodes and $d$-nodes -- and edges $E_M$ between $b$-nodes and $c$-nodes. Then, in $i$-th OMv query, we first remove edges going from $a_i$ and to $d_i$ corresponding to zeros in $u_i$ and $v_i$, respectively; after that we ask the reachability query $(a_i, d_i)$, and finally we remove the remaining edges adjacent to $a_i$ and $d_i$, which correspond to ones in $u_i$ and $v_i$.
\end{proof}

\section{Dynamic Matrix Inverse with Predictions}\label{sec:algerbraic_technical}

In this section we prove \Cref{thm:matrix_inverse} which is our main algebraic data structure. 
In \Cref{sec:reductions} we use this result together with standard reductions from \cite{Sankowski04,Sankowski07,BrandNS19} to obtain the graph applications stated in \Cref{thm:main_algebraic}.
\begin{theorem}\label{thm:matrix_inverse}
    There exists a data structure with the following operations.
    \begin{itemize}
        \item \textsc{Initialize} Initialize on given $\mM\in\F^{n\times n}$ and a queue of $n$ rank-1 
        updates. Complexity $O(n^\omega)$.
        \item \textsc{AppendUpdate} Append 
        a rank-1 update (given via two vectors $u,v$) at the end of the queue in $O(n)$ worst-case update time.
    
        \item \textsc{PerformUpdate$(\eta)$} 
        Performs the update (i.e.~$\mM\leftarrow\mM+uv^\top$) stored at the $\eta$-th position in the queue, and removes it from the queue. 
        The data structure returns the rank and determinant of $\mM$.
        If the matrix is invertible, it also returns the vector $v^\top\mM'^{-1}$ (where $v$ is the vector of the performed rank-1 update and $\mM'$ is the matrix $\mM$ from before the update).
        The worst-case update time is $O(n^{\omega-1}+\min\{n\eta, n^2\})$.
    \end{itemize}
    The queue must have at least $n$ updates at all times.
    The data structure is randomized and its output is correct with high probability.
\end{theorem}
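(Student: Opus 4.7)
My plan is to implement the sketch from \Cref{sec:algerbraic_technical} (the technical overview) and convert its amortized complexity into a worst-case one. The first step is to reduce rank-1 updates to entry updates. Given the queue of $n$ predicted rank-1 updates $(u^{(1)},v^{(1)}),\ldots,(u^{(n)},v^{(n)})$, I stack them into matrices $\mU,\mV\in\F^{n\times n}$ and introduce a diagonal ``switch'' matrix $\mD$ initialized to zero. The formula
\[
f(\mM,\mU,\mV^\top,\mD) = \mV^\top(\mM+\mU\mD\mV^\top)^{-1}
\]
has the property that flipping $\mD_{tt}$ from $0$ to $1$ corresponds to applying the $t$-th predicted rank-1 update, and the $t$-th row of $f$ is exactly the vector we must return. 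By the matrix formula reduction (\Cref{lem:matrix_formula} of \cite{Brand21}), maintaining $f$ under entry updates to $\mD$ and row queries reduces to a dynamic matrix-inverse problem subject to \emph{entry} updates. When the adversary gives an update that was not in the predicted queue, the corresponding $u,v$ are genuinely dense; I handle these as an explicit rank-1 patch outside of $f$, and propagate the needed vector-matrix products through all the implicit layers described below.

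The heart of the construction is the multi-layer implicit representation of the inverse. For $i=0,1,\ldots,\log n$, I let $\mM^{(i)}$ be the matrix version refreshed every $2^i$ steps, and store only
\[
(\mM^{(i)})^{-1} = (\mM^{(i+1)})^{-1}(\mI+\mL^{(i)}\mR^{(i)}),
\]
together with an explicit $(\mM^{(\log n)})^{-1}$ that is recomputed from scratch every $n$ updates in $O(n^\omega)$ time. By the Woodbury identity (in the variant stated as \Cref{lem:woodbury_variant}), the $\mL^{(i)},\mR^{(i)}$ can be built from the at most $2^i$ rank-1 perturbations accumulated since the last refresh of level $i{+}1$, provided we can read off the required rows of $(\mM^{(i+1)})^{-1}$. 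When predictions are correct those rows were already produced as a by-product when we rebuilt level $i{+}1$, so the rebuild of level $i$ costs $O(\MM(n,2^i,2^i))$, which amortizes to $O(n^{\omega-1})$ per update after summing over levels.

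To handle prediction errors, I observe that if the update actually performed was originally slated for position $\eta$ in the future, then the needed rows were precomputed only in levels $i$ with $2^i \geqslant \eta$; for every smaller level $\ell=0,\ldots,\lfloor\log\eta\rfloor$ I must compute the missing row of $(\mM^{(\ell)})^{-1}$ on demand, via a cascade of row-times-$(\mI+\mL^{(j)}\mR^{(j)})$ products from level $\log n$ down to level $\ell$. Each such cascade costs $O(n\cdot 2^\ell)$, and summing over $\ell\le\log\min\{\eta,n\}$ gives the claimed $O(n\min\{\eta,n\})$ extra cost. For completely unpredicted rank-1 updates with dense $v$, the same cascade with $\ell=0$ produces $v^\top(\mM^{(j)})^{-1}$ for all $j$ simultaneously in $O(n^2)$ time, which fits within the $\min\{n\eta,n^2\}$ budget (since such updates correspond to $\eta=\infty$, capped at $n^2$).

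The main obstacle I expect is turning these amortized guarantees into \emph{worst-case} per-operation bounds as required by the theorem. The standard remedy is to spread the work of rebuilding level $i$ evenly across the $2^i$ updates that follow its previous refresh: at each tick, every level $i$ performs $O(\MM(n,2^i,2^i)/2^i)$ units of precomputation towards its next rebuild, using a slightly stale copy of $(\mM^{(i+1)})^{-1}$. This requires keeping two coexisting versions per level (current and under-construction) and carefully interleaving their updates, as in \cite{SankowskiM10,BrandNS19}; the delicate point is to do this in a way that still accommodates prediction errors, since an unexpected early arrival may invalidate work already scheduled. I would argue correctness by expressing every stored $\mL^{(i)},\mR^{(i)}$ as the Woodbury correction for the exact set of rank-1 perturbations currently logged at level $i$, and using the randomized invertibility trick (adding a random diagonal scaling, or evaluating over a sufficiently large field extension) to keep intermediate matrices nonsingular with high probability, which also yields the rank and determinant outputs. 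The \textsc{AppendUpdate} operation is cheap: it merely writes two length-$n$ vectors into the queue, so no structural work is needed there.
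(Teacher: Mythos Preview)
Your outline tracks the paper's construction closely: the reduction of predicted rank-1 updates to entry updates via \Cref{lem:matrix_formula}, the $\log n$ layers of implicit inverses $(\mM^{(i)})^{-1}=(\mM^{(i+1)})^{-1}(\mI+\mL^{(i)}\mR^{(i)})$, the per-level precomputation of rows, the $O(n\min\{\eta,n\})$ cascade cost for early arrivals, and the deamortization by spreading each level-$\ell$ rebuild over the next $2^{\ell-1}$ ticks while letting lower levels point at a stale copy of level $\ell{+}1$. All of this matches the paper, and your determinant remark is fine (the paper maintains it via $\det(\mM+uv^\top)=\det(\mM)(1+v^\top\mM^{-1}u)$).

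The real gap is rank. Your sentence ``using the randomized invertibility trick (adding a random diagonal scaling, or evaluating over a sufficiently large field extension) to keep intermediate matrices nonsingular \ldots\ which also yields the rank'' conflates two different issues. Random diagonal perturbation or field extension keeps the Woodbury machinery alive, but it does \emph{not} tell you $\rank(\mM)$; and the theorem explicitly requires the data structure to function and report rank even when $\mM$ is singular. The paper's fix is Sankowski's embedding: work instead with the $3n\times 3n$ matrix
\[
\mN=\begin{bmatrix}\mM & \mX & 0\\ \mY & 0 & \mI\\ 0 & \mI & \mI_k\end{bmatrix},
\]
where $\mX,\mY$ are random and $\mI_k$ is a partial identity. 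Then w.h.p.\ $\mN$ is nonsingular iff $\rank(\mM)\ge n-k$, so one maintains $\det(\mN)$ and adjusts $k$ by $\pm1$ after each update to track the rank. The nontrivial point---which your proposal misses entirely---is that these adaptive updates to the $\mI_k$ block are \emph{not} known in advance, yet they must be fed to the column-update machinery with its prediction sets $F_i$. The paper observes that over any window of $2^i$ updates to $\mM$ the rank moves by at most $2^i$, so the column indices touched in the $\mI_k$ block lie in a predictable range of size $O(2^i)$; one simply enlarges each $F_i$ by this range (this is why \Cref{lem:ranged_lookahead_matrix_inverse} allows $|F_i|\le c\cdot 2^{i+1}$ for a constant $c$ rather than exactly $2^i$). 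Without this device you have no way to support singular $\mM$ or to output rank within the stated time bounds.

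A smaller omission: the $\mU,\mV,\mD$ gadget has only $n$ slots, so once they are exhausted no further predicted updates can be encoded as entry flips. The paper rebuilds this outer structure every $n/2$ updates (deamortized in the usual way) so that the front half of the queue is always represented in $\mU,\mV$. You should state this explicitly; otherwise your reduction to entry updates is only good for the first $n$ operations.
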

Note that here $\eta$ describes precisely the prediction error as described in the introduction, i.e.~the parameter $\eta$ describes how much earlier an update occurs than predicted.
If all updates occur exactly in the sequence as predicted, we always have $\eta = 1$. If an update occurs $\eta$ iterations too early, then it is stored at the $\eta$-th position in the queue.

\begin{remark}
If the matrix is promised to stay invertible throughout all updates, the data structure of \Cref{thm:matrix_inverse} can be deterministic. 
\end{remark}

\subsection{Reducing Rank-1 Updates to Column Updates}
We reduce the general rank-1 updates as described in \Cref{thm:matrix_inverse} to column updates which are easier to analyze.
Here by column update, we mean an update that changes only one column at a time.
Such a reduction is not possible in the general setting without predictions.
Rank-1 updates without predictions require $\Omega(n^2)$ update time under the OMv hypothesis, but column updates can be performed in $O(n^{1.529})$ time \cite{BrandNS19}.
However, since we are in the offline/prediction settings we can reduce the rank-1 updates to column updates.

The idea is as follows:
Given a set of $n$ predicted rank-1 updates $(u_i,v_i)_{i=1,\ldots,n}$, we can construct $\mU,\mV$ by stacking the vectors next to each other.
Then performing the rank-1 updates to some matrix $\mM$ could be phrased as follows: Let $\mD$ be an initially all-0 matrix and consider $\mM':=(\mM+\mU\mD\mV^\top)$.
By flipping the diagonal entries of $\mD$ from $0$ to $1$, the matrix $\mM'$ is precisely the matrix $\mM$ after receiving the rank-1 updates. In particular, if the first $k$ diagonal entries of $\mD$ are $1$, we have $\mM' = \mM + \sum_{i=1}^k u_i v^\top_i$.
Thus a rank-1 update to $\mM$ can be seen as a single entry update to $\mD$.

Further, it was shown that maintaining the value of any matrix formula $f(\mM_1,\ldots,\mM_k)$ that consists only of basic matrix operations (addition, subtraction, multiplication, inversion) can be reduced to a single matrix inversion.
That is, we can reduce the data structure task of maintaining the value of the formula
$$
f(\mM,\mU,\mV^\top,\mD) = (\mM+\mU\mD\mV^\top)^{-1}
$$
subject to entry updates to $\mD$, to a data structure that maintains the inverse of some matrix subject to entry updates
(and column updates are just a generalization of entry updates.).

Only if a rank-1 update was not predicted does an entry update to $\mD$ not suffice and we must perform an actual rank-1 update to $\mM$.

The following reduction is implicit from the following lemma by v.d.Brand \cite{Brand21}.

\begin{lemma}[{\cite{Brand21}}]\label{lem:matrix_formula}
    Given a matrix formula $f(\mA_1,\ldots,\mA_k)$ consisting of $p\ge k-1$ matrix operations,
    there exists a block matrix $\mB$ where some blocks are precisely $\mA_1,\ldots,\mA_k$ and the inverse $\mB^{-1}$ contains a block that is $f(\mA_1,\ldots,\mA_k)$. 
    When each $\mA_i$ is at most size $n\times n$, then $\mN$ is of size at most $O(pn)\times O(pn)$. 
    The proof is constructive and constructing $\mN$ takes time $O((pn)^2)$.

\end{lemma}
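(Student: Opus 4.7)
The plan is a structural induction on the formula $f$. I would parse $f$ as a computation DAG whose leaves are the input matrices $\mA_1,\ldots,\mA_k$ and whose internal nodes are the $p$ basic operations (addition, subtraction, multiplication, inversion). Processing the DAG in topological order, for each node I would append $O(1)$ fresh block-rows and block-columns of width $n$ to the growing block matrix $\mN$, so that after processing the whole DAG, $\mN$ has dimension $O((k+p)n) = O(pn)$, matching the claimed bound since $p \geq k-1$.

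The gadgets attached at each internal node are standard Schur-complement constructions. Multiplication $\mA\mB$ is exposed by the block upper-triangular gadget $\bigl(\begin{smallmatrix}\mI & -\mA & 0\\ 0 & \mI & -\mB\\ 0 & 0 & \mI\end{smallmatrix}\bigr)$, whose inverse carries $\mA\mB$ in the top-right corner by back-substitution. Inversion $\mA^{-1}$ is exposed as the bottom-right Schur complement of $\bigl(\begin{smallmatrix}\mI & \mA\\ -\mI & 0\end{smallmatrix}\bigr)$. Addition and subtraction reduce to two multiplications against rectangular identity gadgets $(\mI\ \pm\mI)$ and its transpose. Leaves are installed by placing the literal block $\mA_i$ into a designated slot of $\mN$ paired with an identity so that $\mA_i$ also appears unchanged in $\mN^{-1}$. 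I would maintain the inductive invariant that, after the gadget for a node $u$ has been appended, one designated diagonal block of the current $\mN^{-1}$ equals the subformula value $g_u$ computed at $u$.

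The main obstacle I expect is the wiring between gadgets: when a new node consumes an input $g_u$ coming from an earlier subformula, the value $g_u$ lives inside $\mN^{-1}$ and is \emph{not} literally present as a block of $\mN$, so the gadget cannot be written by naive syntactic substitution. I would resolve this via the nested block-LU / Schur-complement identity: partition $\mN^{-1}$ into the fresh blocks of the new gadget versus all older blocks, apply the block-inversion formula once to collapse the older part into an effective block which by the induction hypothesis equals $g_u$, and then apply the single-gadget calculation on top of this effective block. This reduces the compositional step to the one-line verification already done for the isolated gadgets. Once this invariant is in place, extending from a node to its successors is routine.

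Finally, every entry of $\mN$ is either $0$, $\pm\mI$, or a direct copy of some $\mA_i$ placed into a fixed block position by the gadget templates, so the $O((pn)^2)$ entries of $\mN$ can be written in $O((pn)^2)$ time, yielding the stated size and construction-time bounds and completing the proof.
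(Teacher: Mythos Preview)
The paper does not prove this lemma; it is quoted verbatim as a black box from \cite{Brand21} and used without any argument. So there is no ``paper's own proof'' to compare against.

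That said, your sketch is essentially the approach of \cite{Brand21}: build $\mN$ by structural induction on the formula, attaching one constant-size block gadget per operation, and read off each intermediate value as a Schur-complement block of $\mN^{-1}$. Your individual gadgets are fine (minor quibble: for the inversion gadget, $\mA^{-1}$ is the bottom-right block of the \emph{inverse} of your $2\times2$ block matrix, not the Schur complement itself---the Schur complement is $\mA$, and its inverse is what shows up in $\mN^{-1}$). You correctly flag the composition step as the only nontrivial point and propose the right fix: one application of the block-inversion formula collapses the already-built part of $\mN^{-1}$ into the inductively-known value $g_u$, after which the new gadget acts on $g_u$ exactly as if it were a literal input. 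The size and construction-time bounds then follow by counting gadgets. Nothing is missing.
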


This reduction from predicted rank-1 updates to column updates motivates the following \Cref{lem:column_update_variant}, which can be interpreted as a restriction of \Cref{thm:matrix_inverse} to column updates.

\begin{restatable}{lemma}{columnupdatevariant}\label{lem:column_update_variant}
    There exists a data structure with the following operations.
    \begin{itemize}
        \item \textsc{Initialize} Initialize on given $\mM\in\F^{n\times n}$ and a queue of $n$ rank-1 
        updates in $O(n^\omega)$ time.
        \item \textsc{AppendUpdate} Append 
        a rank-1 update (given via two vectors $u,v$) at the end of the queue in $O(n)$ worst-case update time.

        \item \textsc{PerformUpdate$(\eta,\texttt{isQuery})$} 

        Performs the update (i.e.~$\mM\leftarrow\mM+uv^\top$) stored at the $\eta$-th position in the queue, and removes it from the queue. 
        The data structure returns the determinant of $\mM$, and $v^\top\mM^{-1}$ where $v$ is the vector of the performed rank-1 update.\\
        (We can decide to only perform a query, i.e.~return these values but do not change $\mM$.)
        Worst-case update time is $O(n^{\omega-1}+\min\{n\eta,n^2\})$ if $v$ was a standard unit-vector, otherwise it is $O(n^2)$.
    \end{itemize}
    The queue must have at least $n$ updates at all times.
    The data structure returns ``fail'' for the first time $\mM$ becomes singular. The data structure can no longer handle any updates after that point.
\end{restatable}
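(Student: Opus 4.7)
My plan is to follow the multi-level factorization sketched in \Cref{sec:algerbraic_technical}. I would maintain, for each $i = 0, 1, \ldots, \log n$, a ``lazy'' copy $\mM^{(i)}$ of the current matrix which is refreshed to equal $\mM$ once every $2^i$ iterations. Only $\mM^{(\log n)}$ is stored explicitly, recomputed by inversion from scratch in $O(n^\omega)$ time every $n$ updates, contributing $O(n^{\omega-1})$ amortized per update. For $i < \log n$ I would store $\mM^{(i)}$ implicitly through matrices $\mL^{(i)} \in \F^{n \times 2^i}$ and $\mR^{(i)} \in \F^{2^i \times n}$ satisfying
\begin{equation*}
(\mM^{(i)})^{-1} = (\mM^{(i+1)})^{-1} (\mI + \mL^{(i)} \mR^{(i)}),
\end{equation*}
so that any row $v^\top (\mM^{(0)})^{-1}$ can be recovered by a chain of $O(\log n)$ vector--matrix products. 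The \textsc{AppendUpdate} operation just pushes $(u,v)$ onto the queue in $O(n)$ time, and the \texttt{isQuery} option simply skips the step that commits the update to the factorization.

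By the Woodbury identity (as used in the formulas quoted in the overview), $\mL^{(i)}$ and $\mR^{(i)}$ take the explicit form
\begin{equation*}
\mR^{(i)} = (\mV^{(i)})^\top (\mM^{(i+1)})^{-1}, \qquad \mL^{(i)} = \mU^{(i)} (\mI + \mU^{(i)} \mR^{(i)})^{-1},
\end{equation*}
where $\mU^{(i)}, \mV^{(i)}$ stack the $u$ and $v$ vectors of the updates accumulated since the last refresh of level $i+1$. When each $v$ is a standard unit vector, the corresponding row of $\mR^{(i)}$ is simply a row of $(\mM^{(i+1)})^{-1}$ at a known column index. My key trick, adapted from \cite{SankowskiM10, BrandNS19}, is to use the predicted queue to \emph{prepay} this access: whenever I refresh level $i+1$, I would additionally precompute and cache the $2^{i+1}$ rows of $(\mM^{(i+1)})^{-1}$ indexed by the updates predicted in the next $2^{i+1}$ iterations, at extra cost $O(\MM(n, 2^{i+1}, 2^{i+1}))$. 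This is absorbed by the refresh cost and, summed across levels, yields amortized $O(n^{\omega-1})$ per update when predictions are accurate.

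If instead I pop the $\eta$-th queue entry, the needed row of $(\mM^{(j)})^{-1}$ is already cached whenever $2^j \geq \eta$, and in any case for $j = \log n$. For the remaining $\ell \leq \min\{\log \eta, \log n\}$ levels I would recover the missing row on the fly by starting from the smallest-level cached copy and successively right-multiplying by $(\mI + \mL^{(j)} \mR^{(j)})$ for decreasing $j$. Each such vector--matrix product costs $O(n \cdot 2^j)$, so the extra work is $\sum_{j=0}^{\min\{\log \eta, \log n\}} O(n \cdot 2^j) = O(n \min\{\eta, n\})$, matching the claimed bound. For an unpredicted rank-1 update with a dense $v$, the same chaining computes $v^\top (\mM^{(0)})^{-1}$ end-to-end in $O(n^2)$ operations and directly produces the query answer. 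The determinant is maintained through the matrix determinant lemma $\det(\mM + uv^\top) = \det(\mM)(1 + v^\top \mM^{-1} u)$, at an extra $O(n)$ per update using the already-computed $v^\top \mM^{-1}$; singularity is detected the first time this scalar factor vanishes, at which point the structure reports ``fail.''

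The main obstacle is upgrading these per-operation bounds from amortized to the worst-case guarantees stated in the lemma. The level-$i$ refresh requires $O(\MM(n, 2^i, 2^i))$ arithmetic work, which I would spread evenly over the $2^i$ updates between consecutive refreshes, computing the associated matrix products in the background following the deamortization template of \cite{SankowskiM10, BrandNS19}. The row precomputations must likewise be scheduled so that the rows consumed by level $i$ are ready by the time level $i$ is refreshed; since precomputing rows at level $i+1$ is a strict subroutine of the refresh at level $i+1$, the same scheduling argument handles them. The only new wrinkle is the $O(n \min\{\eta, n\})$ ``penalty'' work triggered by an early update, which I would perform inline with the offending operation rather than in the background, so that it cannot disturb the caches being prepared at higher levels.
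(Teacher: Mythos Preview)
Your proposal is correct and follows essentially the same approach as the paper: the paper factors the argument through an intermediate data structure (\Cref{lem:ranged_lookahead_matrix_inverse}) parametrized by abstract nested prediction sets $F_0 \subset \cdots \subset F_{\log n}$ and then instantiates these from the queue, whereas you inline the two steps, but the underlying multi-level Woodbury factorization, row precomputation from the queue, on-the-fly recovery for early updates, and determinant maintenance are identical. The one concrete detail worth being aware of in the deamortization is that while level $i{+}1$ is still being recomputed in the background, level $i$ must temporarily reference the \emph{previous} copy of $(\mM^{(i+1)})^{-1}$ (now up to $2^{i+1}$ updates stale), so $\mL^{(i)},\mR^{(i)}$ need to accommodate $2^{i+1}$ rather than $2^i$ columns---this is the specific wrinkle behind ``following the template.''
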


\subsection{Column Updates with Predictions}
\label{sec:reduction_rank1}

In this subsection, we prove \Cref{lem:column_update_variant}.
The main intermediate result is the following \Cref{lem:ranged_lookahead_matrix_inverse}.
At the end of this subsection we prove that \Cref{lem:ranged_lookahead_matrix_inverse} implies \Cref{lem:column_update_variant}.

\begin{lemma}\label{lem:ranged_lookahead_matrix_inverse}
    There is a data structure with the following operations:
    \begin{itemize}
        \item \textsc{Initialize} Initialize on given $\mM\in\F^{n\times n}$ and non-empty sets $F_0,\ldots,F_{\log n}\subset[n]$ where $|F_i|\le c\cdot2^{i+1}$ for $c \ge 1$ and $F_i\subset F_{i+1}$ in $O(n^\omega)$ time.
        These sets are predictions for the future column updates. Set $F_i$ contains the predicted column indices for the next $2^i$ column updates.
        \item \textsc{QueryAndUpdate} For $u,v\in\F^n$ return $v^\top \mM^{-1}$. Optionally, we can decide to set $\mM\leftarrow\mM+uv^\top$.

        If $v$ is some standard unit vector $e_j$ and $j\in F_i$ for some $i$, then this takes $O(cn^{\omega-1}+n2^i)$ time. 
        Otherwise it takes $O(cn^{\omega-1}+n^2)$ time.

        If this is the $t$-th update, then for all $\ell$ where $2^\ell$ divides $t$ the data structure must also receive new prediction sets $F_1,\ldots,F_\ell$ such that $F_i\subset F_{i+1}$ for all $i=0,\ldots,\log n$ and $|F_i|\subset c\cdot2^{i+1}$.
    \end{itemize}
    The matrix $\mM$ must stay invertible throughout all updates.
\end{lemma}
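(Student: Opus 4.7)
The plan is to maintain a telescoping hierarchy of snapshots $\mM^{(0)}, \mM^{(1)}, \ldots, \mM^{(\log n)}$ of the dynamic matrix, where $\mM^{(0)}=\mM$ and each $\mM^{(i)}$ is resynchronized to $\mM$ only once every $2^i$ updates. I would store $(\mM^{(\log n)})^{-1}$ explicitly and, for each level $i < \log n$, store low-rank factors $\mL^{(i)} \in \F^{n \times 2^i}$ and $\mR^{(i)} \in \F^{2^i \times n}$ satisfying
\[
(\mM^{(i)})^{-1} = (\mM^{(i+1)})^{-1}\bigl(\mI + \mL^{(i)} \mR^{(i)}\bigr),
\]
obtained by rewriting the Woodbury identity for the at most $2^i$ rank-$1$ differences between $\mM^{(i)}$ and $\mM^{(i+1)}$. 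If $\mU^{(i)}, \mV^{(i)}$ denote the buffered update vectors at level $i$, then $\mR^{(i)} = (\mV^{(i)})^\top (\mM^{(i+1)})^{-1}$ and $\mL^{(i)}$ is obtained from $\mU^{(i)}$ via one $2^i \times 2^i$ inversion.

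To answer a query $v^\top \mM^{-1}$ I would unroll the telescope as
\[
v^\top \mM^{-1} = v^\top (\mM^{(\log n)})^{-1} \prod_{j=0}^{\log n - 1}\bigl(\mI + \mL^{(j)} \mR^{(j)}\bigr),
\]
evaluated right-to-left as vector-matrix products. For dense $v$ this costs $O(n^2) + \sum_j O(n \cdot 2^j) = O(n^2)$. For $v = e_j$ with $j \in F_i$, I maintain the invariant that whenever $\mL^{(i)}, \mR^{(i)}$ are fresh, the $|F_i|$ rows of $(\mM^{(i+1)})^{-1}$ indexed by $F_i$ are cached; then $e_j^\top (\mM^{(i+1)})^{-1}$ is a cache lookup and only the inner factors for $\ell < i$ need applying, costing $\sum_{\ell < i} O(n \cdot 2^\ell) = O(n \cdot 2^i)$. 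If the query also performs the update $\mM \leftarrow \mM + uv^\top$, I append $u$ and $v$ to every per-level buffer; this is $O(n)$ per level and is absorbed by the query bound.

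Maintenance is the heart of the argument. Every $2^i$ updates I would rebuild $\mL^{(i)}, \mR^{(i)}$ and refresh the $F_i$-row cache of $(\mM^{(i+1)})^{-1}$. Because the prediction guarantees that the buffered $\mV^{(i)}$ consists of standard unit vectors at indices in $F_i \subseteq F_{i+1}$, the matrix $\mR^{(i)}$ is assembled by looking up rows already cached at level $i{+}1$. Forming $\mL^{(i)}$ then costs $O((2^i)^\omega + \MM(n, 2^i, 2^i)) = O(\MM(n, 2^i, 2^i))$, and refreshing the row cache at level $i{+}1$ -- a block of at most $c \cdot 2^{i+2}$ rows of $(\mM^{(i+2)})^{-1}(\mI + \mL^{(i+1)} \mR^{(i+1)})$ -- costs $O(c \cdot \MM(n, 2^{i+1}, 2^{i+1}))$. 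Spreading this across $2^i$ updates and summing the geometric series $\sum_{i \le \log n} \MM(n, 2^i, 2^i)/2^i = O(n^{\omega-1})$ yields amortized $O(cn^{\omega-1})$ per update; to meet the worst-case bound I would de-amortize by scheduling $\Theta(\MM(n, 2^i, 2^i)/2^i)$ units of rebuild work at level $i$ on every update.

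The main obstacle I expect is enforcing the cache-availability invariant in the correct order. When the $t$-th update triggers rebuilds at all levels $i$ with $2^i \mid t$, fresh prediction sets $F_1, \ldots, F_\ell$ arrive, where $\ell$ is the largest exponent with $2^\ell \mid t$; I must then refresh the $F_i$-row caches from the top of the hierarchy downward. Concretely, once level $\ell$ has been rebuilt, the rows of $(\mM^{(\ell)})^{-1}$ indexed by $F_{\ell-1}$ are recomputed by applying the freshly stored factor $(\mI + \mL^{(\ell)} \mR^{(\ell)})$ to the already refreshed $F_{\ell-1} \subseteq F_\ell$ rows of $(\mM^{(\ell+1)})^{-1}$, a product subsumed by the level-$\ell$ rebuild budget; the same recursion then propagates down to level $0$. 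A second subtlety is the unpredicted case: if $v$ is dense or its index lies outside every current $F_i$, I pay the $O(n^2)$ branch of the query, and appending $u, v$ to the buffers does not invalidate any cached row until the next scheduled rebuild of the corresponding level, so the invariants are preserved and correctness follows from the Woodbury identity applied to each telescoping factor.
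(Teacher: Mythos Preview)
Your proposal is correct and follows essentially the same approach as the paper: the telescoping hierarchy $(\mM^{(i)})^{-1}=(\mM^{(i+1)})^{-1}(\mI+\mL^{(i)}\mR^{(i)})$ via Woodbury, caching the $F_i$-indexed rows at each level, charging $O(n2^i)$ to an update whose index first appears in $F_i$, and the standard spread-over-$2^{i-1}$-updates de-amortization. One small wording issue: your sentence ``the prediction guarantees that the buffered $\mV^{(i)}$ consists of standard unit vectors at indices in $F_i$'' overstates things---some buffered updates may have been mispredicted---but since you already compute (and can store) $e_j^\top(\mM^{(\ell)})^{-1}$ for all $\ell\le i$ during the query, those rows are available at rebuild time and the accounting you give goes through.
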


We briefly discuss why \Cref{lem:ranged_lookahead_matrix_inverse} will imply update complexities as stated in \Cref{lem:column_update_variant}.

The sets $(F_i)_{i = 0, \ldots, \log n}$ are the sets of possible column indices where we predict the future $2^i$ columns updates to be.
For instance, if we are promised a sequence of $n$ column updates with $j_1,\ldots,j_n$ being their column indices (i.e.~when given a queue of updates as in \Cref{lem:column_update_variant}), then we can pick $F_i =\{j_1,\ldots,j_{2^i}\}$ for $i=0,\ldots,\log n$.
However, note that we do not need such a precise prediction for our data structure to work. It is enough if we have some $O(2^i)$-sized prediction for the future $2^i$ updates.\footnote{This will later be crucial to maintain the rank of $\mM$, because the reduction from dynamic matrix rank to dynamic matrix inverse performs adaptive updates that cannot be accurately predicted.}

Assuming we have the promised sequence of $n$ column updates, but some column update happens $\eta$ iterations too early (i.e.~$\eta$ as in \textsc{PerformUpdate} in \Cref{thm:matrix_inverse}), then we can find that column index in some $F_k$ for $k\leqslant\min\{1+\log \eta, \log n\}$.
Thus such an update takes time $O(n^{\omega-1}+n2^k) =O(n^{\omega-1}+\min\{n\eta,n^2\})$ by \Cref{lem:ranged_lookahead_matrix_inverse}.
Note that w.l.o.g.~$F_{\log n} = [n]$, so for any column update, we can always assume $\eta \leqslant n$.
However, if an update is not a column update (i.e.~if we perform a general rank-1 update), the update complexity will be $O(n^2)$ according to \Cref{lem:ranged_lookahead_matrix_inverse}.

The proof of \Cref{lem:ranged_lookahead_matrix_inverse} relies on the following implicit representation of a matrix inverse.

\begin{lemma}\label{lem:woodbury_variant}
    Given $\mM\in\F^{n\times n}$ and a rank-$k$ update $\mU,\mV\in\F^{n\times k}$ we have
    \[
    (\mM+\mU\mV^\top)^{-1} = \mM^{-1} (\mI+\mL\mR),
    \]
    where $\mR = \mV^\top \mM^{-1}$, $\mL=\mU(\mI+\mU\mR)^{-1}$.
\end{lemma}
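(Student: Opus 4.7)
The plan is to recognize this identity as a repackaging of the classical Sherman--Morrison--Woodbury formula, and to verify it either by direct substitution into Woodbury or by multiplying out $(\mM + \mU\mV^\top)\cdot\mM^{-1}(\mI + \mL\mR)$ and checking that the result is $\mI$. Since the ingredients are all standard, I expect no real obstacle beyond keeping track of dimensions (with $\mU,\mV\in\F^{n\times k}$ we have $\mR\in\F^{k\times n}$ and $\mL\in\F^{n\times k}$, so both $\mI_k + \mR\mU$ and $\mI_n + \mU\mR$ arise in the manipulations and must be distinguished from context).

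First I would invoke Woodbury in its standard form,
\[
(\mM+\mU\mV^\top)^{-1} \;=\; \mM^{-1} \;-\; \mM^{-1}\mU\,(\mI_k + \mV^\top\mM^{-1}\mU)^{-1}\,\mV^\top\mM^{-1},
\]
substitute $\mR = \mV^\top\mM^{-1}$, and factor $\mM^{-1}$ on the left to obtain
\[
(\mM+\mU\mV^\top)^{-1} \;=\; \mM^{-1}\!\left(\mI_n - \mU(\mI_k+\mR\mU)^{-1}\mR\right).
\]
Matching this with the target expression $\mM^{-1}(\mI + \mL\mR)$ identifies $\mL$ up to sign as $\mU(\mI_k+\mR\mU)^{-1}$. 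To bring $\mL$ into the form written in the lemma, I would apply the push-through identity $\mA(\mI+\mB\mA)^{-1} = (\mI+\mA\mB)^{-1}\mA$ with $\mA=\mU$, $\mB=\mR$, giving $\mU(\mI_k+\mR\mU)^{-1} = (\mI_n+\mU\mR)^{-1}\mU$, which reconciles the two ways of writing the correction factor.

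As a sanity check and as an alternative proof, I would verify the identity by direct multiplication. Note that $(\mM+\mU\mV^\top)\mM^{-1} = \mI+\mU\mR$, so
\[
(\mM+\mU\mV^\top)\cdot\mM^{-1}(\mI+\mL\mR) = (\mI+\mU\mR)(\mI+\mL\mR) = \mI + \bigl(\mU + (\mI+\mU\mR)\mL\bigr)\mR.
\]
Plugging in the definition of $\mL$ makes the bracketed term vanish, yielding $\mI$; a symmetric computation on the other side finishes the verification.

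The only subtlety is the invertibility of $\mI_k + \mR\mU$ (equivalently of $\mI_n + \mU\mR$), which is needed to define $\mL$. This is immediate from the matrix determinant lemma, $\det(\mM+\mU\mV^\top) = \det(\mM)\det(\mI_k+\mV^\top\mM^{-1}\mU)$, so whenever $\mM$ and $\mM+\mU\mV^\top$ are both invertible—the regime in which \Cref{lem:ranged_lookahead_matrix_inverse} guarantees the matrix stays invertible throughout all updates—the factor $(\mI_k+\mR\mU)$ is automatically invertible and the identity is well-defined.
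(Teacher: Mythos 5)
Your proof is correct and follows essentially the same route as the paper's: invoke the Woodbury identity and factor $\mM^{-1}$ out on the left. You also rightly flag the two blemishes in the printed statement — the sign, and $\mI+\mU\mR$ where the computation produces $\mI_k+\mR\mU$ — which the paper's own proof implicitly confirms are typos (it derives $\mM^{-1}(\mI-\mL\mR)$ with $\mL=\mU(\mI+\mV^\top\mM^{-1}\mU)^{-1}=\mU(\mI_k+\mR\mU)^{-1}$), and your push-through identity and sign bookkeeping reconcile them correctly.
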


\begin{proof}
    The Woodbury identity \cite{Woodbury50} states that
    \begin{align*}
    (\mM+\mU\mV^\top)^{-1} 
    =&~ 
    \mM^{-1} - \mM^{-1}\mU(\mI+\mV^\top\mM^{-1}\mU)^{-1}\mV^\top\mM^{-1},
    \end{align*}
    which implies
    \begin{align*}
    (\mM+\mU\mV^\top)^{-1} 
    =&~
    \mM^{-1}(\mI-\mU(\mI+\mV^\top\mM^{-1}\mU)^{-1}\mV^\top\mM^{-1}) \\
    =&~
    \mM^{-1}(\mI-\mL\mR).
    \end{align*}
\end{proof}

\begin{proof}[Proof of \Cref{lem:ranged_lookahead_matrix_inverse}]

    We start by describing a data structure with \emph{amortized} time complexity and will later extend it to worst-case time.

    Let $\mM^{(i)}$ initially be the matrix $\mM$, but then we update $\mM^{(i)}\leftarrow \mM$ only every $2^i$ calls to \textsc{QueryAndUpdate}.
    So we always have $\mM^{(0)}=\mM$, but $\mM^{(i)}$ might be that status of $\mM$ some $2^i$ calls to \textsc{QueryAndUpdate} ago.
    Throughout all updates, we represent the inverses of these matrices in the following implicit form
    \begin{align}
        (\mM^{(i)})^{-1}=(\mM^{(i+1)})^{-1}(\mI+ \mL^{(i)} \mR^{(i)}) \label{eq:inverse_invariant}
    \end{align}
    where $\mL^{(i)},\mR^{(i)}\in\F^{n\times2^i}$.
    We do this for $i=0,\ldots,\log n$.
    \paragraph{Initialization.}
    We compute $\mM^{-1}$ and store this matrix as $\mM^{(\log n)}$. Then set $\mL^{(i)}=\mR^{(i)}=0$ for all $i$.
    \paragraph{Update.}
    Assume we receive the $t$-th call to \textsc{QueryAndUpdate} where $\ell$ is the largest integer such that $2^\ell$ divides $t$. We start describing the calculations performed by our data structure. Afterward we will analyze the complexity.
    
    First, we compute $v^\top(\mM^{(0)})^{-1}$ where $v$ is one of the vectors describing the rank-1 update given by the current call to \textsc{QueryAndUpdate},
    as we must return this result.

    Further, the data structure performs the following operations to update its internal representation of the inverse.

    To maintain invariant \eqref{eq:inverse_invariant} we must update $\mM^{(i)}$ for all $i\leqslant \ell$.
    Note that $\mM^{(\ell+1)}$ was last updated $2^\ell$ calls to \textsc{QueryAndUpdate} ago, so the difference between $\mM^{(\ell)}$ and $\mM^{(\ell+1)}$ are only the past $2^\ell$ updates.
    By \Cref{lem:woodbury_variant}, we can choose $\mL^{(\ell)}$ and $\mR^{(\ell)}$ as follows:
    
    Let $\mV$ be the $n\times 2^\ell$ matrix, where the columns are given by the vectors $v$ of the past $2^\ell$ calls to \textsc{QueryAndUpdate}.
    Then set $\mR^{(\ell)}=V^\top(\mM^{(\ell+1)})^{-1}$.
    
    Let $\mU$ be the $n\times 2^\ell$ matrix, where the columns are given by the vectors $u$ of the past $2^\ell$ calls to \textsc{QueryAndUpdate}.
    (Though all $u$ for which we only performed a query but no update, we will set the corresponding column in $\mU$ to $0$ instead.)
    Then set $\mL^{(\ell)}=\mU(\mI+\mU\mR)^{-1}$.

    Thus by \Cref{lem:woodbury_variant} we have
    $$
    (\mM^{(\ell)})^{-1}=(\mM^{(\ell+1)})^{-1}(\mI+ \mL^{(\ell)} \mR^{(\ell)}).
    $$
    For $i<\ell$, we set $\mL^{(i)}=\mR^{(i)}=0$ because $\mM^{(i)}=\mM^{(\ell)}$.
    In summary, we still satisfy \eqref{eq:inverse_invariant}.
    
    \paragraph{Complexity.} 
    For now, let us assume that we only have column updates,
    i.e.~we only change one column of $\mM$. In that case we can assume $v$ is a standard unit vector because we are adding $uv^\top=ue_j^\top$ to $\mM$ to reflect adding $u$ to the $j$-th column of $\mM$ for some $j$.

    Then computing $\mV^\top(\mM^{(\ell)})^{-1}$ is equivalent to picking $2^\ell$ rows of $(\mM^{(\ell)})^{-1}$. 
    Let's assume we can obtain all these rows within some time $T_\ell$.
    Computing $\mR^{(\ell)}$ now takes $O(\MM(n,2^\ell,2^\ell))$ time.
    So maintaining the representation \eqref{eq:inverse_invariant} for this specific $\ell$ takes $O((T_\ell+\MM(n,2^\ell,2^\ell))/2^\ell)$ amortized time.

    Now assume we have perfect predictions, i.e.~we whenever we update any $\mM^{(i)}$, we know precisely which future $2^i$ rows will be required of $(\mM^{(i)})^{-1}$.
    Then we could precompute these $2^i$ rows whenever we update $\mM^{(i)}$.
    In that case, $T_i=O(n2^i)$ since we just need to read the precomputed rows from memory.

    With this motivation, we always precompute the rows of $(\mM^{(\ell)})^{-1}$ with row index in $F_\ell$ whenever we update the representation \eqref{eq:inverse_invariant}.
    This costs an additional $O(c\MM(n,2^\ell,2^\ell))$ that amortizes over the next $2^\ell$ updates.
    (Note that computing rows of $(\mM^{(\ell)})^{-1}$ requires the same rows of $(\mM^{(\ell+1)})^{-1}$ but by $F_\ell\subset F_{\ell+1}$ these rows of $(\mM^{(\ell+1)})^{-1}$ have been precomputed already.)

    In summary, if all updates occur as predicted, i.e.~all future $2^i$ updates affect only the columns with index in $F_i$ for all $i=0,\ldots,\log n$, then the amortized update time would just be 
    \[O\big(\sum_{i=0}^{\log n} c\MM(n,2^i,2^i)/2^i)\big) = O(c\MM(n,2^{\log n},2^{\log n})/2^{\log n}) = O(cn^{\omega-1}).\]

    Now assume there is some error in our prediction, i.e.~the column index $j$ of an updated column is not in $F_i$ for some $i$. That is an issue since we need the $j$-th row of $(\mM^{(i)})^{-1}$ but that row was not precomputed. 
    To compute this row, we must spend $O(n2^i)$ time 
    (by dimension of $\mL^{(i)},\mR^{(i)}$) and must also compute the $j$-th row of $(\mM^{(i+1)})^{-1}$.   

    Thus by recursion, we spend $O(n2^k)$ time, where $k>i$ is the smallest integer such that $F_k$ contains  $j$\footnote{This parameter $k$ can be seen as the error in our prediction, i.e.~if an update happens $\eta$ iterations too early, then we will have $j\in F_k$ for $k<1+\log \eta$.},
    because then that row of $(\mM^{(k)})^{-1}$ was precomputed during a previous update.
    Thus our update time increases by an additive $O(n2^k)$.

    Now let us focus on the case where $v$ is not a standard unit vector, i.e.~we perform a general rank-1 update instead of a column update. 
    In that case, computing $v^\top(\mM^{(i)})^{-1}$ takes $O(n2^i)$ time plus the time to compute $v^\top(\mM^{(i+1)})^{-1}$.
    This leads to at most $O(n^2)$ time for $v^\top(\mM^{(\log n)})^{-1}$.

    Note that we must also return $v^\top(\mM^{(0)})^{-1}$ after each update. 
    This is subsumed by the $O(n2^k)$ and $O(n^2)$ cost above, depending on whether $v$ is a standard unit vector or not.
   
    \paragraph{Worst-Case.}
    The worst-case bounds are obtained via a standard technique. 
    The idea is as follows.
    When constructing $(\mM^{(\ell)})^{-1}$ (i.e.$(\mL^{(\ell)},\mR^{(\ell)}$), we spread the calculations over next $2^{\ell-1}$ calls to \textsc{QueryAndUpdate}.
    Thus updating \eqref{eq:inverse_invariant} for a specific $\ell$ introduces only $O(\MM(n,2^\ell,2^\ell)/2^\ell)$ worst-case cost per call to \textsc{QueryAndUpdate}.
    
    Note that this modification requires us to modify the recursion of \eqref{eq:inverse_invariant} a bit.
    $(\mM^{(\ell)})^{-1}$ is not immediately accessible as we spread its calculation over several updates. So $(\mM^{(\ell-1)})^{-1}$ cannot access $(\mM^{(\ell)})^{-1}$ yet.
    So instead, it will access the old variant of $(\mM^{(\ell)})^{-1}$ (the one we are currently replacing).
    This means that $\mR^{(\ell-1)},\mL^{(\ell-1)}$ must be larger by a factor of two ($2^\ell$ columns instead of $2^{\ell-1}$) because the old version of $(\mM^{(\ell)})^{-1}$ represents the matrix $\mM$ some $2^\ell$ updates ago.
    
    This old version of $(\mM^{(\ell)})^{-1}$ is accessible when we refresh $(\mM^{(\ell-1)})^{-1}$ because its computation was spread over $2^{\ell-1}$ updates.

    The same is done recursively. For any $i<\ell$, where $\ell$ is the largest integer that divides $t$ (during the $t$-th update), we set
    \[
    (\mM^{(i)})^{-1}=(\mM'^{(i+1)})^{-1}(\mI+\mL^{(i)}\mR^{(i)}),
    \]
    where $\mM'^{(i+1)}$ is either the old version of $\mM^{(i+1)}$, or the current version of it, if it has finished its calculation.

    Since the dimensions of the $\mL,\mR$ matrices increases only by a constant factor, the time complexity also increases by only a constant factor, but it is now worst-case.
\end{proof}

We can now prove \Cref{lem:column_update_variant} via \Cref{lem:ranged_lookahead_matrix_inverse}.

We restate \Cref{lem:column_update_variant}:
\columnupdatevariant*

\begin{proof}[Proof of \Cref{lem:column_update_variant}]
    Let us quickly recap some terminology.
    If $v$ of the rank-1 update is a standard unit vector, then the rank-1 update is a ``column update'' i.e.~it changes only one column. 
    We now define ``the column index of the update'', i.e.~we assign some index to each update:
    If the update is a column update, we select the index of the affected column. If the update is not a column update, we just assign $1$ as the column index.
    Thus ``the column index of the update'' is well-defined regardless of whether the update is a column or general rank-1 update.

    \paragraph{Maintaining sets $F_i$.}
    We have a queue of updates $Q$.
    This queue implies the sequence of sets $F_1,\ldots,F_{\log n}$ required by the data structure from \Cref{lem:ranged_lookahead_matrix_inverse}. 
    That is, at initialization (and every $2^i$ updates) $F_i$ contains the column-indices of the next $2^i$ updates for each $i=0,\ldots,\log n$.
  
    \paragraph{Initialization.}
    We are given the initial matrix $\mM$ and a queue of updates $Q$.
    This queue implies the sequence of sets $F_1,\ldots,F_{\log n}$ required by the data structure from \Cref{lem:ranged_lookahead_matrix_inverse}. 
    We initialize \Cref{lem:ranged_lookahead_matrix_inverse} on $\mM$ and $(F_i)_{i = 0, \ldots, \log n}$.
    \paragraph{AppendUpdate.}
    We are given a new update to append to the queue. We store this future update in $Q$.
    \paragraph{Update.}
    We are given a queue position $\eta$ and must perform the update stored at the $\eta$-th position in the queue. 
    If the update was a column update, then the column index of the update will be stored in some $F_i$ for $i \leqslant \min\{1+\log \eta, \log n\}$. Thus the update takes $O(n^{\omega-1}+\min\{n\eta, n^2\})$ operations.
    If this update is not a column update, then by \Cref{lem:ranged_lookahead_matrix_inverse} it takes $O(n^2)$ operations.
    The data structure of \Cref{lem:ranged_lookahead_matrix_inverse} returns $v^\top\mM^{-1}$.

    \paragraph{Determinant.}

    We have $\det(\mM+uv^\top) = \det(\mM)\cdot(1+v^\top\mM^{-1}u)$. Here $v^\top\mM^{-1}$ is given to us after each update, so we can maintain the determinant with an extra $O(n)$ overhead per update, which is subsumed by the cost of an update to \Cref{lem:ranged_lookahead_matrix_inverse}.
    We must compute $\det(\mM)$ during initialization which takes $O(n^\omega)$ operation, which is also subsumed by the initialization cost of \Cref{lem:ranged_lookahead_matrix_inverse}.
\end{proof}

\subsection{Putting Everything Together}

We now prove \Cref{thm:matrix_inverse} using \Cref{lem:column_update_variant} and the reduction from predicted rank-1 to column updates that we outlined in \Cref{sec:reduction_rank1}. 

\begin{proof}[Proof of \Cref{thm:matrix_inverse}]
    \Cref{thm:matrix_inverse} is almost the same as \Cref{lem:column_update_variant}, except that
    \begin{itemize}
        \item The update time $O(n^{\omega-1}+\min\{n\eta,n^2\})$ of \Cref{lem:column_update_variant} and \Cref{thm:matrix_inverse} matches only for column updates. Rank-1 updates are slower in \Cref{lem:column_update_variant} 
        with $O(n^2)$ update time.
        \item \Cref{thm:matrix_inverse} works on singular matrices and can maintain the rank.
    \end{itemize}

    We here describe how to extend \Cref{lem:column_update_variant} to obtain \Cref{thm:matrix_inverse}.

    \paragraph{Rank-1 Update Complexity.}

    We here describe how to reduce predicted rank-1 updates to column updates. Using this reduction, any predicted rank-1 update can be performed via a column update, we obtain $O(n^{\omega-1}+\min\{n\eta,n^2\})$ update time for predicted rank-1 updates.
    
    We can phrase the future rank-1 updates to $\mM$ as follows:
    Consider the formula $\mV'^\top(\mM+\mU\mD\mV^\top)^{-1}$. Here $\mU,\mV,\mV'$ are the vectors $u,v$ of the $n$ future rank-1 updates, but $\mV'$ contains one extra all-0 column, and $\mD$ is a diagonal matrix.
    Initially $\mD$ is all-0, and then one by one we set the diagonal entries to 1 to perform the queued rank 1 updates.
    Using \Cref{lem:matrix_formula}, there is a matrix $\mB$ that can be used to maintain the value of this formula.
    Since the formula consists of $5$ operations (1 addition, 3 products and 1 inversion), the matrix $\mB$ from \Cref{lem:matrix_formula} is of size $O(n)\times O(n)$ so there will be only a constant complexity blow-up.
    
    Any predicted rank-1 update requires us to just change one entry of $\mB$ (i.e.~the one entry of the block corresponding to $\mD$) so it can be performed via a column update to $\mB$ in $O(n^{\omega-1}+\min\{n\eta,n^2\})$ time via \Cref{thm:matrix_inverse}.
    A rank-1 update not already stored in $\mU,\mV$ takes $O(n^2)$ time (by \Cref{lem:ranged_lookahead_matrix_inverse}) because we perform a typical rank-1 update to $\mM$ (and thus a rank-1 update to $\mB$).
    After every update, we output one row of $\mV'^\top(\mM+\mU\mD\mV^\top)^{-1}$ which is contained in one row of $\mB^{-1}$ (and we know which row of $\mB$ that is, by \Cref{lem:matrix_formula} being constructive).
    In case of an update not stored in $\mV$ (and thus also not stored in $\mV'$), we set the extra 0 column of $\mV'$ to the vector $v$ of the rank-1 update. This, too, takes $O(n^2)$ time.
    
    We restart this dynamic algorithm every $n/2$ updates, thus $\mU,\mV$ will contain all updates that according to the queue should happen within the next $n/2$ iterations. This takes $O(n^{\omega-1})$ amortized time per update and can be made worst-case via standard techniques.
    By doing these restarts, any updates at position $\eta \leqslant n/2$ within the queue are stored in $\mU,\mV$ and thus have update time $O(n^{\omega-1}+n\eta)$.
    For updates at position $\eta > n/2$, they might not be stored in $\mU,\mV$ and thus are performed in $O(n^2)$ time. 
    
    In summary, we have $O(n^{\omega-1}+\min\{n\eta,n^2\})$ update time when performing an update stored at the $\eta$-th position within the queue.

\paragraph{Rank.}
    Sankowski \cite{Sankowski07} showed the following lemma:
    \begin{lemma}[{\cite{Sankowski07}}]\label{lem:rank}
        Given $\mM\in\F^{n\times n}$, let 
        \begin{align*}
            \mN := \begin{bmatrix}
                \mM & \mX & \\
                \mY &  & \mI \\
                & \mI & \mI_k
            \end{bmatrix} \in \F^{3n\times 3n}
        \end{align*}
        where $\mX,\mY$ are $n\times n$ matrices where each entry is a uniformly at random sampled number from $\F$. Matrix $\mI_k$ is the partial identity where only the first $k$ diagonal entries are $1$ and the remaining entries are $0$.
        Then with probability at least $1-O(n/|\F|)$ we have that $\mN$ is full rank if and only if $\rank(\mM) \ge n-k$.
        Further, if $k=0$, then the top-left $n\times n$ block of $\mN^{-1}$ is precisely $\mM^{-1}$.
    \end{lemma}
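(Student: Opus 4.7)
The plan is to first reduce $\mN$ to a block form whose rank is transparent. I would perform block row operations (which preserve rank): subtract $\mX$ times block row 3 from block row 1 to clear the $(1,2)$ block, and then add $\mX\mI_k$ times block row 2 to block row 1 to clear the $(1,3)$ block. Block row 1 then becomes $[\mM + \mX\mI_k\mY,\ 0,\ 0]$. A quick kernel computation on the resulting matrix shows that any kernel vector $(v_1,v_2,v_3)$ satisfies $(\mM + \mX\mI_k\mY)v_1 = 0$, $v_3 = -\mY v_1$, and $v_2 = \mI_k\mY v_1$, so the kernel of $\mN$ is in bijection with the kernel of $\mM + \mX\mI_k\mY$. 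Hence $\rank(\mN) = 2n + \rank(\mM + \mX\mI_k\mY)$, and $\mN$ is full rank if and only if $\mM + \mX\mI_k\mY$ is invertible.

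Next, I would analyze invertibility of $\mM + \mX\mI_k\mY$. The matrix $\mX\mI_k\mY$ equals the product of the first $k$ columns of $\mX$ with the first $k$ rows of $\mY$, so it has rank at most $k$; moreover, varying these $k$ columns and rows freely realizes every $n\times n$ matrix of rank at most $k$ as $\mX\mI_k\mY$ via thin factorization. One direction is deterministic: if $\rank(\mM) < n - k$, then $\rank(\mM + \mX\mI_k\mY) \leqslant \rank(\mM) + k < n$ for every $\mX, \mY$, so $\mN$ is singular. For the converse, assuming $\rank(\mM) = r \geqslant n - k$, I would exhibit a rank-$(n-r)$ perturbation $\mR$ with $\mM + \mR$ invertible: pick a basis $u_1,\ldots,u_n$ of $\F^n$ extending a basis of $\ker \mM$, and a basis $w_1,\ldots,w_n$ of $\F^n$ extending $w_i := \mM u_i$ for $i\leqslant r$; then setting $\mR u_i = 0$ for $i\leqslant r$ and $\mR u_i = w_i$ for $i > r$ makes $\mM + \mR$ send the basis $(u_i)$ to the basis $(w_i)$, hence be invertible.

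With this witness in hand, $\det(\mM + \mX\mI_k\mY)$, viewed as a polynomial in the entries of $\mX$ and $\mY$, is not identically zero. Since each entry of $\mX\mI_k\mY$ is bilinear in the $(\mX, \mY)$-entries and the determinant is of degree $n$ in the matrix entries, the polynomial has total degree at most $2n$. By Schwartz--Zippel, a uniformly random choice of $\mX, \mY \in \F^{n\times n}$ is a zero of this polynomial with probability at most $2n/|\F| = O(n/|\F|)$, proving the rank characterization with the stated probability.

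Finally, for the case $k = 0$: the reduction above gives $\rank(\mN) = 2n + \rank(\mM)$, so $\mN$ is invertible iff $\mM$ is. To identify the top-left block of $\mN^{-1}$, I would write $\mN^{-1}$ with nine unknown $n \times n$ blocks and solve $\mN \mN^{-1} = \mI$ block-by-block. The third block row $[0, \mI, 0]$ of $\mN$ forces the middle block row of $\mN^{-1}$ to equal $[0, \mI, 0]$; substituting into the equations from the first block row of $\mN$ then yields $\mM \cdot (\text{top-left block}) = \mI$, identifying it as $\mM^{-1}$. The main delicate step I anticipate is verifying that the explicit perturbation $\mR$ constructed above can indeed be realized as $\mX\mI_k\mY$ (handled by the thin factorization placed in the first $n-r$ columns of $\mX$ and rows of $\mY$); once that is established, the Schwartz--Zippel conclusion is routine.
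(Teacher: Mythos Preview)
The paper does not prove this lemma; it is quoted from \cite{Sankowski07} and used as a black box in the proof of \Cref{thm:matrix_inverse}. Your argument is therefore a genuine addition rather than something to compare against an existing proof, and it is correct. The block row reduction cleanly yields $\rank(\mN)=2n+\rank(\mM+\mX\mI_k\mY)$; the rank-subadditivity bound $\rank(\mM+\mX\mI_k\mY)\le\rank(\mM)+k$ handles the deterministic ``only if'' direction; your explicit rank-$(n-r)$ completion $\mR$ (with $n-r\le k$) together with its thin factorization placed in the first $n-r$ columns of $\mX$ and rows of $\mY$ correctly witnesses that $\det(\mM+\mX\mI_k\mY)$ is not the zero polynomial; and the degree bound $2n$ gives the claimed $O(n/|\F|)$ failure probability via Schwartz--Zippel. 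The $k=0$ block-inverse computation is also valid. One small presentational point: you could skip the separate kernel computation, since after your row reduction the bottom-right $2n\times 2n$ block $\begin{bmatrix}0&\mI\\\mI&\mI_k\end{bmatrix}$ is manifestly invertible, so the rank identity follows directly from block triangularity.
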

    We can reduce maintaining the rank of $\mM$ to maintaining the determinant of $\mN$ (as defined in \Cref{lem:rank}):
    Initially, compute the rank of $\mM$ and let $k=n-\rank(\mM)$. Then run the our data structure on $\mN$.
    With each update to $\mM$, the rank can change by at most $1$. If we observe that the rank decreases (because the determinant of $\mN$ becomes $0$), we unroll the last update, increase $k$, and then perform the original update again.
    Alternatively, if the determinant did not become $0$, we check if the rank increased by decreasing $k$ by $1$ (i.e. one extra update to $\mN$) and check if the determinant becomes $0$ (if so, revert this update again).
    Note that thus for each update to $\mM$, we will perform up to $2$ updates to $\mN$. 
    Further, if we look some $t$ updates to $\mM$ into the future, then the rank can change at most by $t$, so we have a range of size $O(t)$ where the future updates to $\mI_k$ will occur.
    So while we do not know the exact location of the future updates, we can still construct the sets $(F_i)_{0\le i\le \log n}$ as required by \Cref{lem:ranged_lookahead_matrix_inverse}: 
    For any $0\le i\le \log n$, let $F_i$ be the column indices of the next $2^i$ updates to $\mM$ and additionally the $2\cdot2^i$ column indices $2n+k+j$ for $-2^i \le j \le 2^i$ representing the range within which we may change the $\mI_k$ block of $\mN$.
    
    The failure probability of \Cref {lem:rank} is $O(n/|\F|)$. For most of our use-cases, we will have $|\F| = \poly(n)$. However, if $|\F|$ is not polynomial size, we can make the failure probability some small $n^{-c}$ for any arbitrary constant $c>0$ by instead using some field extension $\F'$ of polynomial size.
\end{proof}

\section{Fully Dynamic Graph Algorithms with Predictions}
\label{sec:reductions}

In this section we prove \Cref{thm:intro:algebraic}. We restate the result here in a more detailed way, i.e., we define the different operations of the data structure and how the predictions are given to it:

\begin{theorem}\label{thm:main_algebraic}
    There exists a fully dynamic algorithm that solves the following problems under vertex updates with predictions:
    triangle detection, single-source reachability, strong connectivity, directed cycle detection, maximum matching size, number of vertex disjoint $st$-paths. 

    The operations of the data structure are as follows
    \begin{itemize}
        \item \textsc{Initialize} Initialize on the graph and a queue of $n$ predicted vertex updates in $O(n^\omega)$ time.
        \item \textsc{AppendUpdate} Append a vertex update at the end of the queue in $O(n)$ worst-case time.
        \item \textsc{PerformUpdate$(\eta)$} 
        Performs the update stored at the $t$-th position in the queue, and removes it from the queue. 
        The worst-case update time is $O(n^{\omega-1}+n\min\{\eta, n\})$. 
     
    \end{itemize}
    The queue must have at least $n$ updates at all times.
   
\end{theorem}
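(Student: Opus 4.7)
The plan is to reduce \Cref{thm:main_algebraic} to \Cref{thm:matrix_inverse} in a black-box fashion, reusing the reductions of \cite{Sankowski04,Sankowski07,BrandNS19} that were originally developed for the offline and online models without predictions. For each of the listed problems, the reduction maintains either the inverse, the determinant, or the rank of some matrix $\mN$ associated with the current graph. For triangle detection, single-source reachability, strong connectivity, and directed cycle detection, $\mN$ is a suitable polynomial perturbation of $\mI - \mA$ (where $\mA$ is the adjacency matrix) evaluated over a sufficiently large field, and the answer is read off from entries or rows of $\mN^{-1}$. For maximum matching size and the number of vertex-disjoint $st$-paths, $\mN$ is a randomly perturbed Tutte/Edmonds matrix and the answer is derived from its rank.

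The critical observation is that in all these reductions, a single vertex update to the input graph translates into an $O(1)$-rank update to $\mN$: inserting or deleting a vertex $v$ only modifies the row and the column of $\mN$ associated with $v$, which can be expressed as at most two rank-$1$ updates of the form $\mN \leftarrow \mN + u e_v^\top$ and $\mN \leftarrow \mN + e_v w^\top$. Consequently, a queue of $n$ predicted \emph{vertex} updates translates directly into a queue of at most $2n$ predicted \emph{rank-$1$} updates on $\mN$, each of which has $v$ equal to a standard unit vector, so the fast $O(n^{\omega-1} + n\min\{\eta, n\})$ branch of \Cref{thm:matrix_inverse} applies. Appending one predicted vertex update costs $O(n)$ worst-case time because each of the two resulting vectors has $n$ entries.

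With this translation in place, the running times follow directly: initialization costs $O(n^\omega)$ via \Cref{thm:matrix_inverse} applied to $\mN$; \textsc{AppendUpdate} costs $O(n)$; and \textsc{PerformUpdate}$(\eta)$ costs $O(n^{\omega-1} + n\min\{\eta, n\})$, since a vertex update at queue position $\eta$ becomes rank-$1$ updates at positions at most $2\eta$ in the underlying rank-$1$ queue. After each performed update, the required entries of $\mN^{-1}$ are recovered from the returned vector $v^\top \mN^{-1}$ (taking $v$ to be a standard unit vector yields a whole row of $\mN^{-1}$), and the rank/determinant are provided by \Cref{thm:matrix_inverse} directly. The main thing to verify carefully, and the most likely pitfall, is that the adaptive reduction from dynamic rank used for matching and vertex-disjoint $st$-paths -- which toggles entries of the $\mI_k$ block of Sankowski's gadget depending on whether the determinant just dropped to zero -- remains compatible with the prediction queue; fortunately these auxiliary updates live in an $O(\eta)$-wide window of column indices, which is exactly the slack that \Cref{lem:ranged_lookahead_matrix_inverse} was designed to tolerate through the enlarged prediction sets $F_i$.
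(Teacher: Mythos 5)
Your proposal matches the paper's proof: each problem is reduced to \Cref{thm:matrix_inverse} via the algebraic gadgets of \cite{Sankowski04,Sankowski07,BrandNS19}, with a vertex update translating into $O(1)$ predicted rank-one updates whose $v$-vectors are standard unit vectors (so the fast branch applies), and the adaptive updates needed for rank maintenance in the matching reduction absorbed by the slack built into the prediction sets $F_i$. The only cosmetic difference is that for triangle detection the paper inverts a block matrix whose inverse contains $-\mA^3$ rather than a perturbation of $\mI-\mA$, so a vertex update there costs $6$ rather than $2$ rank-one updates, which is still $O(1)$ and changes nothing.
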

\begin{remark}
    While we state the queue to need at least $n$ updates, any smaller $\Omega(n)$ also works by repeating each update $O(1)$ times.
\end{remark}
Note that $\eta$, the position of an update in the queue, as in \Cref{thm:main_algebraic} matches the definition of $\eta$ being an error measure of the prediction.
The queue can be seen as the predicted sequence of updates, and if all predictions are correct, we always perform the first update in queue, i.e.~$\eta=1$.
If the prediction is inaccurate and some update occurs $\eta$ iteration to early, then that update is stored not at the front of the queue, but at position $\eta$.

We now prove \Cref{thm:main_algebraic} via several reductions. Each subsection will present one reduction that solves one of the graph problems stated in \Cref{thm:main_algebraic} by reducing it to \Cref{thm:matrix_inverse}.

\subsection{Triangle Detection}
Given graph $G$ and its adjacency matrix $\mA$, the number of triangles in $G$ is given by $\sum_v (\mA^3)_{v,v}/3$, because $(\mA^3)_{v,v}$ is the number of paths from $v$ to $v$ using $3$ edges, i.e.~the number of triangles containing $v$.

After performing a vertex update to some $v\in V$, let $\mA'$ be the old adjacency matrix of $G$ and $\mA$ be the new one.
Then the number of triangles in $G$ changes by $(\mA^3)_{v,v}-(\mA'^3)_{v,v}$.
Thus we can maintain the number of triangles by querying only 2 entries of $\mA^3$ (one before and one after the update).

We can maintain $\mA^3$ via a matrix inverse by
$$
\begin{bmatrix}
\mI & \mA & & \\
& \mI & \mA & \\
& & \mI & \mA \\
& & & \mI
\end{bmatrix}^{-1}
=
\begin{bmatrix}
\mI & -\mA & \mA^2 & -\mA^3 \\
& \mI & -\mA & \mA^2 \\
& & \mI & -\mA \\
& & & \mI
\end{bmatrix}
$$
So we can solve triangle detection by running \Cref{thm:matrix_inverse} on the $3n\times3n$ matrix above.
Any vertex update to $G$ corresponds to updating one row and one column of $\mA$, so can be implemented via $3\cdot2=6$ rank-1 updates to the matrix above.

\subsection{Directed Cycle Detection}

\begin{lemma}[{\cite{BrandNS19}}]\label{lem:cycledirection}
    Let $G=(V,E)$ be a directed graph and $\mA \in \F^{n\times n}$ s.t.~each $\mA_{u,v}$ for $(u,v)\in E$ is picked independently and uniformly at random from $\F$, and all other entries of $\mA$ are $0$.
    Then with probability at least $1-n/|\F|$ we have $\det(\mI-\mA)=1$ if and only if $G$ is acyclic.
\end{lemma}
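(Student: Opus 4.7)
The plan is to apply the Leibniz expansion
\[
\det(\mI - \mA) = \sum_{\sigma \in S_n} \operatorname{sgn}(\sigma) \prod_{i=1}^n (\mI - \mA)_{i,\sigma(i)} = \sum_{\sigma \in S_n} \operatorname{sgn}(\sigma) \prod_{i:\sigma(i)=i}(1 - \mA_{i,i}) \prod_{i:\sigma(i)\neq i}(-\mA_{i,\sigma(i)}),
\]
and interpret each permutation $\sigma$ via the graph structure. Since $\mA_{u,v} = 0$ whenever $(u,v) \notin E$, a permutation $\sigma$ contributes a nontrivial term only when every non-fixed point $i$ satisfies $(i,\sigma(i)) \in E$, i.e.\ the non-trivial cycles of $\sigma$ correspond to vertex-disjoint directed cycles of $G$.

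If $G$ is acyclic, then in particular $G$ has no self-loops, every $\mA_{i,i}$ is deterministically $0$, and the only permutation whose non-fixed points decompose into graph cycles is $\sigma = \mathrm{id}$, contributing $\prod_i (1 - 0) = 1$. Hence $\det(\mI - \mA) = 1$ deterministically, which handles the ``if'' direction.

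For the converse, assume $G$ contains a cycle and fix any simple directed cycle $C = v_1 \to v_2 \to \cdots \to v_k \to v_1$. Viewing $p(\mA) \defeq \det(\mI-\mA)-1$ as a polynomial of total degree at most $n$ in the variables $\{\mA_{u,v} : (u,v)\in E\}$, I claim that the monomial $M \defeq \prod_{j=1}^{k} \mA_{v_j, v_{j+1}}$ (indices modulo $k$) appears in $p$ with coefficient $\pm 1$. To verify this, fully expand every factor $(1 - \mA_{i,i})$ in the Leibniz formula; the result is a sum indexed by pairs $(\sigma, T)$ with $T \subseteq \mathrm{Fix}(\sigma)$ recording which fixed-point factors were expanded as $-\mA_{i,i}$, and distinct pairs produce distinct monomials in the variables. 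For such a monomial to equal $M$, we must take $T = \emptyset$ (since $M$ has no diagonal variable) and $\sigma$ must send each $v_j$ to $v_{j+1}$ while fixing every other vertex, forcing $\sigma = \sigma_C$. The resulting coefficient is $\operatorname{sgn}(\sigma_C)\cdot(-1)^k = (-1)^{k-1}(-1)^k = -1$, so $p\not\equiv 0$. The Schwartz--Zippel lemma then yields $\Pr[\det(\mI-\mA) = 1] = \Pr[p(\mA)=0] \leq n/|\F|$.

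The one place that requires care is the no-cancellation claim above: many distinct cycle covers of subgraphs of $G$ in general contribute to $\det(\mI - \mA)$, and a priori contributions from longer cycles, disjoint unions of cycles, or diagonal $-\mA_{i,i}$ factors could conspire to cancel the $\sigma_C$-term. The resolution, which is the crux of the proof, is that after fully expanding the diagonal factors, distinct $(\sigma, T)$ pairs produce literally distinct monomials, so the coefficient of any chosen monomial comes from exactly one term and cannot be cancelled.
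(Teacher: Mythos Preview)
The paper does not prove this lemma; it is quoted as a black box from \cite{BrandNS19} and immediately applied. Your argument therefore supplies a proof where the paper offers none, and it is correct.

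The acyclic direction is clean. For the converse, the crucial no-cancellation step is justified exactly as you say: after expanding every diagonal factor $(1-\mA_{i,i})$, each term is indexed by a pair $(\sigma,T)$ with $T\subseteq\mathrm{Fix}(\sigma)$, and the resulting monomial determines $(\sigma,T)$ uniquely---the off-diagonal variables $\mA_{i,j}$ in the monomial pin down $\sigma$ on its non-fixed points, the diagonal variables $\mA_{i,i}$ present are exactly $T$, and all remaining indices are fixed points outside $T$. Hence the coefficient of $M$ is contributed by a single pair and equals $-1$, so $p\not\equiv 0$ and Schwartz--Zippel finishes.

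One small wrinkle: the sentence ``since $M$ has no diagonal variable'' tacitly assumes $k\ge 2$. If $C$ is a self-loop ($k=1$) then $M=\mA_{v_1,v_1}$ \emph{is} diagonal, and the unique contributing pair is $(\mathrm{id},\{v_1\})$ rather than $(\sigma_C,\emptyset)$. The coefficient is still $-1$ and nothing else changes; you can either note this case separately or simply remark that the monomial $M$ uniquely determines the pair $(\sigma,T)$ regardless of $k$, which covers both cases uniformly.
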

We run \Cref{thm:matrix_inverse} on matrix $\mI-\mA$ as in \Cref{lem:cycledirection}. A vertex update to $G$ corresponds to changing one row and column of $\mI-\mA$ so it can be performed with two rank-1 updates.

\subsection{Single Source Reachability and Strong Connectivity}

\begin{lemma}[{\cite{Sankowski04}}]\label{lem:reachability}
    Let $G=(V,E)$ be a directed graphs and $\mA \in \F^{n\times n}$ s.t.~each $\mA_{u,v}$ for $(u,v)\in E$ is picked independently and uniformly at random from $\F$, and all other entries of $\mA$ are $0$.
    Then for any $s,t\in V$ with probability at least $1-n/|\F|$ we have $(\mI-\mA)^{-1}_{s,t} \neq 0$ if and only if $s$ can reach $t$.
\end{lemma}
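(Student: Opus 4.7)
The plan is to analyze $(\mI - \mA)^{-1}_{s,t}$ as a rational function in the non-zero entries of $\mA$, viewed as formal variables $\{x_e\}_{e \in E}$, and then apply the Schwartz--Zippel lemma. By Cramer's rule,
\[
\bigl((\mI - \mA)^{-1}\bigr)_{s,t} \;=\; \frac{(-1)^{s+t}\det\bigl((\mI - \mA)_{\hat{t},\hat{s}}\bigr)}{\det(\mI - \mA)} \;=:\; \frac{P}{Q},
\]
where $P$ and $Q$ are polynomials in the $x_e$ of degrees at most $n-1$ and $n$, respectively. The polynomial $Q$ is not identically zero, since $Q$ evaluates to $\det(\mI) = 1$ when all $x_e = 0$.

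The first structural claim I would establish is that $P$ is identically zero if and only if $s$ cannot reach $t$ in $G$. For the forward direction, let $R \subseteq V$ be the set of vertices reachable from $s$. Every edge leaving $R$ stays in $R$, so reordering vertices to place $R$ first makes $\mI - \mA$ block lower-triangular; hence the inverse, viewed as a matrix of rational functions, has an identically zero upper-right block. Since $s \in R$ and $t \notin R$, the $(s,t)$ entry is identically zero, forcing $P \equiv 0$.

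For the reverse direction, I would exhibit a concrete evaluation of $P$ that is nonzero. Let $s = v_0 \to v_1 \to \cdots \to v_k = t$ be a simple path in $G$, and set $x_e = 1$ for the edges on this path and $x_e = 0$ for all other edges, obtaining a matrix $\mA'$. Then $\mA'$ is nilpotent (its underlying graph is a simple directed path together with isolated vertices), so $\det(\mI - \mA') = 1$ and $(\mI - \mA')^{-1} = \sum_{j=0}^{n-1} (\mA')^{j}$; a direct calculation yields $\bigl((\mI - \mA')^{-1}\bigr)_{s,t} = (\mA'^{\,k})_{s,t} = 1$, so $P$ evaluates to a nonzero value at this assignment and is therefore not identically zero. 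Finally, I would invoke Schwartz--Zippel: when $s$ reaches $t$, the polynomial $PQ$ is nonzero of degree at most $2n-1$ and vanishes at a uniform random point with probability at most $O(n/|\F|)$; when $s$ does not reach $t$, $P \equiv 0$ deterministically, so it suffices that $Q \neq 0$, which holds with probability at least $1 - n/|\F|$. In either case the total failure probability is $O(n/|\F|)$, matching the stated bound.

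The main (and only nontrivial) obstacle is arguing $P \not\equiv 0$ in the reachable case. The evaluation at the nilpotent specialization above is, I believe, the cleanest way to do this, as it bypasses the more delicate formal-power-series identity $(\mI - \mA)^{-1} = \sum_{k\ge 0} \mA^k$, whose use to extract a surviving monomial would require careful handling of positive characteristic.
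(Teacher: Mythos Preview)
The paper does not prove this lemma; it is stated with a citation to \cite{Sankowski04} and used as a black box. So there is no ``paper's own proof'' to compare against, and your proposal is supplying something the paper simply imports.

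Your argument is correct in substance. The Cramer's-rule setup, the block-triangular argument for the unreachable case, and the nilpotent specialization along a simple path for the reachable case are all sound; in particular, the specialization to a single simple path is a clean way to avoid characteristic issues, since the $(s,t)$ entry of $\sum_j (\mA')^j$ then equals exactly $1$ rather than a walk count that could vanish mod $p$. Two minor points: first, your ``forward'' and ``reverse'' labels are swapped relative to the iff as you stated it (you prove ``$s$ cannot reach $t \Rightarrow P\equiv 0$'' under the name ``forward direction''), though both directions are indeed established. Second, your Schwartz--Zippel bound on $PQ$ gives failure probability at most $(2n-1)/|\F|$, not the $n/|\F|$ in the lemma statement; you correctly flag this as $O(n/|\F|)$, and for the paper's purposes (where $|\F|$ is taken polynomially large) the constant is immaterial.
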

We run \Cref{thm:matrix_inverse} on matrix $\mI-\mA$ as in \Cref{lem:reachability}. A vertex update to $G$ corresponds to changing one row and column of $\mI-\mA$ so it can be performed with two rank-1 updates.
To return the single-source reachability for some source vertex $s$, we read the $s$-th row of $(\mI-\mA)^{-1}$.
This can be obtained by using \Cref{thm:matrix_inverse} to query $e_s^\top(\mI-\mA)^{-1}$.

To solve strong connectivity, not that a graph is strongly connected if and only if for any one vertex $v$, every other vertex can reach $v$ and $v$ can reach every other vertex.
Thus we can solve strong connectivity by running two data structures for single source reachability.

\subsection{Maximum Matching Size and Counting \texorpdfstring{$st$}{st}-Paths}

\begin{lemma}[{\cite{Lovasz79}}]\label{lem:matching}
    Given graph $G=(V,E)$ let $\mA\in\F^{n\times n}$ be the randomized Tutte matrix. That is, for each $(u,v)\in E$ let $\mA_{u,v}=-\mA_{v,u}$ be picked independently and uniformly at random from $\F$.
    Then for $|\F|=\Z_p$ ($p=\poly(n)$) we have w.h.p.~$\rank(\mA)=2\cdot$maximum matching size.
\end{lemma}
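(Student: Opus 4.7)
The plan is to split the statement into two parts. Let $\mA^\star$ denote the symbolic Tutte matrix whose nonzero entries are formal variables $x_{uv}=-x_{vu}$ for $\{u,v\}\in E$, and let $\mA$ be its random evaluation over $\Z_p$. First I would prove the purely algebraic identity $\rank(\mA^\star)=2\mu(G)$ over the field of rational functions in the $x_{uv}$, and then transfer this to $\mA$ via Schwartz--Zippel. The skew-symmetry of $\mA^\star$ is what makes this approach work, because it lets us use Pfaffians: for any even-sized subset $S\subseteq V$, $\det(\mA^\star_{S,S})=\operatorname{Pf}(\mA^\star_{S,S})^2$, and the Pfaffian expands as a signed sum $\sum_M \pm \prod_{(u,v)\in M} x_{uv}$ over the perfect matchings $M$ of the complete graph on $S$.

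For the upper bound $\rank(\mA^\star)\le 2\mu(G)$, I would use the standard fact that the rank of a skew-symmetric matrix equals the largest size of a nonsingular principal submatrix. If $\mA^\star_{S,S}$ is nonsingular then $\operatorname{Pf}(\mA^\star_{S,S})$ is a nonzero polynomial; since distinct perfect matchings of the complete graph on $S$ contribute distinct monomials in the $x_{uv}$, at least one matching appearing in the expansion must use only edges of $G[S]$, forcing $G[S]$ to contain a perfect matching and hence $|S|/2\le \mu(G)$. For the lower bound, let $S^\star$ be the set of $2\mu(G)$ endpoints of a maximum matching $M^\star$; the monomial $\prod_{(u,v)\in M^\star} x_{uv}$ is a surviving term of $\operatorname{Pf}(\mA^\star_{S^\star,S^\star})$, so $\mA^\star_{S^\star,S^\star}$ is symbolically nonsingular and $\rank(\mA^\star)\ge 2\mu(G)$.

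The final step is Schwartz--Zippel: $\operatorname{Pf}(\mA^\star_{S^\star,S^\star})$ has total degree $\mu(G)\le n/2$, so evaluating the $x_{uv}$ uniformly at random in $\Z_p$ with $p=\poly(n)$ keeps it nonzero with probability at least $1-n/(2p)=1-1/\poly(n)$. On that event $\rank(\mA)\ge 2\mu(G)$, and the deterministic bound $\rank(\mA)\le\rank(\mA^\star)\le 2\mu(G)$ always holds since specializing variables can only decrease rank. The main obstacle is the noncancellation claim inside the Pfaffian expansion: that distinct perfect matchings give rise to distinct (up to sign) monomials, so the term witnessing a matching is never killed by another term. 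This is the combinatorial heart of the argument and is precisely why one needs skew-\emph{symmetry} (rather than symmetry) of $\mA^\star$; once it is in hand, the rest is routine linear algebra and a one-line Schwartz--Zippel application.
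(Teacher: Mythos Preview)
The paper does not supply its own proof of this lemma; it is quoted as a classical result of Lov\'asz and used as a black box. Your argument is correct and is essentially the standard proof: identify $\rank(\mA^\star)$ with $2\mu(G)$ via Pfaffians of principal submatrices, then pass to the random evaluation by Schwartz--Zippel.

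Two small remarks. First, the ``noncancellation'' you single out as the combinatorial heart is in fact immediate: two distinct perfect matchings of $G[S]$ are distinct edge sets, hence their contributions to $\operatorname{Pf}(\mA^\star_{S,S})$ are monomials in disjoint sets of variables and cannot cancel regardless of sign. So this step is not an obstacle at all. Second, the structural fact you invoke---that the rank of a skew-symmetric matrix over a field of characteristic $\neq 2$ is realized by a principal submatrix---deserves at least a one-line justification (e.g., if rows indexed by $I$ span the row space, skew-symmetry forces columns indexed by $I$ to span the column space of $A_{I,[n]}$, so $A_{I,I}$ is nonsingular). With those two points tightened, your write-up is a clean self-contained proof of a lemma the paper merely cites.
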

We run \Cref{thm:matrix_inverse} on the Tutte matrix $\mA$ as in \Cref{lem:matching}. A vertex update to $G$ corresponds to changing one row and column of $\mA$ so it can be performed with two rank-1 updates.

Counting the number of vertex disjoint $st$-paths can be solved via standard reduction to maximum bipartite matching size, see e.g.~\cite{MulmuleyVV87}.

\section{Fully Dynamic Algorithms with Predicted Deletion Times}
\label{sec:semi_predicted}

We consider the model in which insertions are arriving online but deletions are based on a predicted sequence, which we refer to as semi-online with prediction setting. We can extend the reduction of \cite{PR2023} that gives a reduction from a fully dynamic \textit{semi-online} data structure, in which the sequence of deletions are offline, to an \textit{insert-only} data structure. In particular, assuming that the insert only data structure has \textit{worst-case} update time $\Gamma$, the semi-online data structure of \cite{PR2023} has update time $O(\Gamma \log T)$ for a sequence of $T$ updates. We observe that an adaptation of their result can be used for the predicted deletion model. 

We start by sketching their amortized semi-online to worst-case insert-only reduction and then explain how this algorithm can be adapted to handle a deletion with error $\eta_i$. Specifically in the rest of this section we argue that the following theorem holds\footnote{Note that this claim holds for the problems in which the order in which elements are added does not impact the state of the problem -- which holds for almost all graph problems studied in the dynamic algorithms literature.}: 

\thmPredictedDels*

The key idea of the reduction in \cite{PR2023} is to order the list of elements in the reverse of deletion at any time and then perform each deletion by rewinding the computation (undo the insertion) in time $O(\Gamma)$ for the first element in this reversed list. Since re-ordering the elements at each update is expensive, they perform an amortization that performs the re-ordering partially for a set of $O(2^j)$ elements in every $2^j$ updates for each $ j = 0, \ldots, \lceil \log T \rceil $. 
In particular, they keep the elements in $L=\lceil \log T \rceil +1$ buckets $B_0, B_1,\ldots,B_{\lceil \log T \rceil}$ such that $B_j$ contains the elements indexed in range $[2^j, 2^{j+1})$ in the reverse order of deletion, and hence $B_0$ contains the next deletion to be performed. At a high-level, for each $j = 0, \ldots, L$, once in every $2^j$ updates, the algorithm re-orders a set of $O(2^j)$ elements (in $B_0 \cup \cdots \cup B_j$, whose total size is a sum over geometric-sized buckets) and rewinds the computation on these sets in time $O(\Gamma \cdot 2^j)$. The algorithm ensures that $B$ contains the first deletion that needs to be performed. Hence for each $j = 0, \ldots, L$ we get an amortized update time of $O(\Gamma)$. This amortization scheme is similar to the one we use in \Cref{sec:reductions}, with the difference that here the reverse ordered deletions in buckets are utilized and deletion is performed by a rewind (undo insertion) operation, which also takes $O(\Gamma)$ time in the RAM model. 

We get a similar reduction in the predicted deletion setting using the following adaptation: when the update (deletion) of the $i$-th element $e$ arrives $\eta:=\eta_i$ positions \textit{earlier} than predicted, we rewind the computation over all the sequence of elements in these $\eta$ positions until we get to the correct position of this element in time $O(\eta \Gamma)$ and then re-insert the $\eta-1$ deleted elements and update the lists as we would with any other insertion. In other words, we can simply maintain the state of the algorithm of \cite{PR2023} by performing $O(\eta)$ rewind operations and $O(\eta)$ re-insert operations, each of which takes $O(\eta \Gamma )$ time.
 
Note that if an element $e$ arrives later than predicted in the sequence, we simply ignore the deletion $e$ at the predicted arrival time and process it the actual arrival time. This will result in the next elements in the sequence being shifted earlier, and the cost will be incurred to the next elements in the sequence, so that when the element arrives the ordering is corrected by those earlier elements. 
 We can then run the algorithm of \cite{PR2023}, and note that the following invariants proven in \cite{PR2023} remain unchanged. 

For any time $t \in [T]$, let $\kappa(t)$ denote the largest integer such that $t$ is a multiple of $2^{\kappa(t)}$, let $E_t$ be the set of elements after the  $t$-th update and let $E_{t,r}$ be the set of $r$ elements that are deleted first (if there are less than $r$ elements, we set $E_{t,r}=E_t$). Their algorithm maintains the following invariants:
\begin{lemma}[\cite{PR2023}]\label{lem:semi_invariant_amortized}
    After $t$ updates the algorithm maintains:
    \begin{itemize}
        \item For each  $j= 0, \ldots, \kappa(t)$ we have $E_{t,2^{j+1}} \subseteq B_0 \cup \cdots \cup B_j$;
        \item $|B_0 \cup \cdots \cup B_{\kappa(t)+1}| \leqslant O(2^{\kappa(t)})$.  
    \end{itemize}
\end{lemma}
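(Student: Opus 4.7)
The plan is to prove both invariants by induction on $t$, mirroring the analysis in \cite{PR2023}. The central structural property to exploit is the \emph{cascading rebuild}: at each time $t$, after processing the update, the algorithm rebuilds the buckets at levels $0,1,\ldots,\kappa(t)$. The rebuild at level $k = \kappa(t)$ collects all elements currently in $B_0\cup\cdots\cup B_k$ together with every element inserted since the last time a level-$(k{+}1)$ rebuild took place (which, by definition of $\kappa$, was at time $t - 2^k$), sorts the collected set by predicted deletion rank, and writes elements of ranks $[2^j,2^{j+1})$ into $B_j$ for $j=0,\ldots,k$, pushing any remaining elements into $B_{k+1}$.

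For the base case $t=0$, all buckets are empty and both claims hold vacuously. For the inductive step at time $t$ with $\kappa(t)=k$, first apply the inductive hypothesis at time $t-2^k$: there we had $\kappa(t-2^k)\geqslant k+1$, so the first invariant guarantees that $E_{t-2^k,\,2^{k+2}}\subseteq B_0\cup\cdots\cup B_{k+1}$ at that moment. Between that moment and time $t$, at most $2^k$ insertions and $2^k$ deletions occurred. Thus every element currently among the top $2^{k+1}$ in deletion order is still contained in the set being redistributed at time $t$: either it was already in $B_0\cup\cdots\cup B_{k+1}$ back then, or it was inserted during the intervening $2^k$ updates. After sorting and placement, these top elements populate $B_0,\ldots,B_k$ exactly according to their ranks, so $E_{t,\,2^{j+1}}\subseteq B_0\cup\cdots\cup B_j$ for every $j\leqslant k$. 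For the second invariant, the redistributed set has size at most $|B_0\cup\cdots\cup B_{k+1}|_{\mathrm{old}}+2^k$, which is $O(2^k)$ by the inductive hypothesis at $t-2^k$ (where level $k+1$ satisfied the size bound). The portion with rank exceeding $2^{k+1}$ is deposited into $B_{k+1}$, whose size is therefore also $O(2^k)$, and the claim follows.

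The step that requires the most care is handling inaccurate predictions. When the update at time $t$ is a deletion of the $i$-th inserted element that actually arrives $\eta_i$ positions ahead of its predicted slot, the algorithm rewinds the last $\eta_i$ operations, re-inserts the $\eta_i-1$ affected elements in the corrected order, and only then applies the genuine deletion followed by the cascading rebuild. The key observation is that after this sequence of rewind/reinsert operations, the state of the data structure is exactly the state it would have reached had the update stream from the start been consistent with the \emph{current} predicted deletion order. Consequently the inductive invariant transfers verbatim: verifying that (i) the rewind is a faithful inverse of insertion (which is exactly the property used in the worst-case reduction of \cite{PR2023}), and (ii) the reinsertion hits the same buckets as an original insertion would, is the main technical burden and is what allows the proof to reduce to the unperturbed analysis of \cite{PR2023}.
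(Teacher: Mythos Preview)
The paper does not actually prove this lemma: it is stated as a direct citation of \cite{PR2023} and used as a black box to derive the amortized bound. Your proposal therefore goes beyond the paper by reconstructing an inductive argument. The final paragraph, in which you reduce the predicted-deletion setting back to the unperturbed analysis via rewind and reinsert, is precisely what the paper does (in a single sentence) before invoking the lemma.

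That said, your inductive step has an internal inconsistency that is a genuine gap. You describe the level-$k$ rebuild as collecting only $B_0\cup\cdots\cup B_k$ (plus recent insertions), yet your first-invariant argument merely places an element $e\in E_{t,2^{k+1}}$ inside $B_0\cup\cdots\cup B_{k+1}$ at time $t-2^k$. If $e$ landed in $B_{k+1}$ at that moment, it stays there: every intermediate rebuild occurs at level strictly below $k$ and never pulls elements down from $B_{k+1}$, so under your description $e$ is \emph{not} in the set redistributed at time $t$, and the containment $E_{t,2^{k+1}}\subseteq B_0\cup\cdots\cup B_k$ does not follow. The fix is that the level-$k$ rebuild must also absorb $B_{k+1}$ (which is why the second invariant is phrased in terms of $B_0\cup\cdots\cup B_{\kappa(t)+1}$); as written, the induction does not close. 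The same inconsistency resurfaces in your size argument, where you bound the redistributed set by $|B_0\cup\cdots\cup B_{k+1}|_{\mathrm{old}}+2^k$ despite having excluded $B_{k+1}$ from the collection one paragraph earlier. Correct the rebuild description to include $B_{k+1}$ and both invariants go through.
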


Hence, after performing the $O(\eta \Gamma)$ rewinds and re-insertion we can use Lemma \ref{lem:semi_invariant_amortized} to get the amortized update time over $T$ updates. At each time $t \in [T]$ we need to process the union of buckets $B_0 \cup .. \cup B_{\kappa(t)}$, and by summing over the bucket sizes we have:

\[ \frac{1}{T} \sum^T_{t=1} O(2^{\kappa(t)} \Gamma + \eta \Gamma) = O\big( (\frac{1}{T} \sum^{L}_{\kappa(t)=0} 2^{\kappa(t)} \cdot \frac{T}{2^{\kappa(t)}}) \Gamma + \eta L\Gamma\big)= O( (\eta+1)\Gamma \log T) \]

This algorithm is then de-amortized to get a worst-case bound with an additional log factor. At a high-level, the goal is to slowly perform the longer sequence of re-ordering and re-inserting operation over the updates. The challenge is that the upcoming insertions interleave the scheduled re-orderings. To handle this they distribute and preprocess the re-ordering tasks to $O(\log T)$ threads that further divide the updates into smaller geometrically decreasing sized \textit{epochs}. We refer the readers to~\cite{PR2023} for further details on the de-amortization details. They show that we can perform each update (without error) in worst-case $O(\Gamma \log^2 T )$ time. Our adaptation adds $O(\eta)$ rewind operations that take $O(\eta \Gamma)$ time, and introduces additional $O(\eta)$ insertions, which also take in total $O(\Gamma \log^2 T)$ time, and thus the worst-case time is $O(\Gamma \log^2 T (\eta+1))$.

\subsection{Application to All-Pairs Shortest Paths with Vertex Updates} 
We first observe that we can use the reduction of \cite{PR2023}, combined with an incremental APSP algorithm based on Floyd-Warshall, first observed by Thorup \cite{Thorup2005}, to get a fully dynamic semi-online APSP algorithm with $O(n^2 \log^2 n)$ worst-case update time. Note that here we can bound $T \leq n$ since at any point we can restrict our attention to a set of at most $n$ vertices inserted.

\begin{observation}[\cite{Thorup2005}]
Given an edge-weighted directed graph undergoing online vertex insertions, there is a deterministic algorithm that maintains exact all-pairs shortest paths in this graph with $O(n^2)$ worst-case update time.
\end{observation}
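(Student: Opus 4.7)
The plan is to maintain, after each vertex insertion, an $n \times n$ distance matrix $D$ such that $D[u,w]$ equals the exact shortest path distance from $u$ to $w$ in the current graph, using an incremental variant of Floyd--Warshall. The invariant I want to preserve is that, if $V_t$ denotes the vertex set after the $t$-th insertion, then $D[u,w]$ stores $d_G(u,w)$ for all $u, w \in V_t$, where paths are allowed to use any vertex of $V_t$ as an intermediate.

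When a new vertex $v$ arrives together with its incident edges (both directions, with weights, possibly $+\infty$ when absent), I would perform three phases, each costing $O(n^2)$ time. First, initialize the new row and column: set $D[v,v] = 0$, and for every existing $u \in V_{t-1}$ compute $D[u,v] = \min_{x \in V_{t-1}} \bigl( D[u,x] + w(x,v) \bigr)$ and symmetrically $D[v,w] = \min_{x \in V_{t-1}} \bigl( w(v,x) + D[x,w] \bigr)$, where $w(\cdot,\cdot)$ refers to the weight of a direct edge incident to $v$ (taken as $+\infty$ if absent). Each such entry is a min over at most $n-1$ candidates, so this phase is $O(n^2)$. Second, relax pairs through $v$: for every ordered pair $(u,w)$ of previously existing vertices, update $D[u,w] \leftarrow \min \bigl( D[u,w],\, D[u,v] + D[v,w] \bigr)$. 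This is the single ``intermediate vertex'' step of Floyd--Warshall restricted to the one newly added node $v$, and again costs $O(n^2)$.

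To justify correctness I would argue inductively. By the induction hypothesis, before the update $D[u,w]$ is the length of a shortest $u$-to-$w$ path through $V_{t-1}$ for all $u,w \in V_{t-1}$. Any shortest path in the new graph either avoids $v$ (handled by the old value of $D[u,w]$) or visits $v$ exactly once (because in a shortest path in a graph without negative cycles, internal vertex repetitions can be removed); in the latter case it splits into a $u$-to-$v$ subpath and a $v$-to-$w$ subpath, each using only vertices of $V_{t-1}$ as intermediates. The first-phase computation of $D[u,v]$ and $D[v,w]$ captures exactly these subpath lengths, since the last edge into $v$ (respectively first edge out of $v$) must be a direct edge from some $x \in V_{t-1}$ to $v$ (from $v$ to $x$), and the remainder is a shortest path within $V_{t-1}$.

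There is essentially no hard step here; the only subtle point is ensuring that, when computing $D[u,v]$ in phase one, the matrix $D$ still correctly reflects shortest paths in $V_{t-1}$ (the inductive hypothesis), so the answers for the new row and column are themselves shortest-path distances in the augmented graph. I would then remark that the guarantee is worst-case $O(n^2)$ per insertion, that $T \leqslant n$ in our application (since we only ever have $n$ live vertices), and that the algorithm is deterministic. Plugging this incremental algorithm, with $\Gamma = O(n^2)$ and $T \leqslant n$, into \Cref{thm:semi_prediction} then yields \Cref{thm:apsp}.
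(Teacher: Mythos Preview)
Your proposal is correct and follows exactly the incremental Floyd--Warshall approach that the paper alludes to (and attributes to Thorup) without spelling out; the paper itself gives no proof beyond the citation, so your write-up simply fills in the expected details. One cosmetic slip: you announce ``three phases'' but describe only two (row/column initialization, then relaxation through $v$); either drop the word ``three'' or split the first phase into the in- and out-distance computations.
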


\begin{corollary}[\cite{PR2023, Thorup2005}] Given an edge-weighted directed graph undergoing online vertex insertions and offline (known) vertex deletions, there is a deterministic algorithm that maintains exact all-pairs shortest paths in this graph with $O(n^2 \log ^2 n)$ worst-case update time.
\end{corollary}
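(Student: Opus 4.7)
The plan is to combine the preceding observation, namely Thorup's incremental APSP with worst-case per-insertion cost $\Gamma = O(n^2)$, with the \cite{PR2023} reduction, which is exactly the error-free case $\eta_i = 0$ of \Cref{thm:semi_prediction}. The argument is essentially a black-box composition, so I will first set $\Gamma$ from the incremental ingredient, then invoke the reduction, and finally close the gap between the resulting $\log T$ factor and the claimed $\log n$.

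For the first step I would take Thorup's algorithm as a black box via the preceding observation: maintain the full $n\times n$ distance matrix $D$ and, upon inserting a new vertex $v$ with its incident edges, refresh the row and column of $D$ indexed by $v$ against the existing distances and then relax every pair $(i,j)$ through $v$. Both phases cost $O(n^2)$ in the worst case, giving $\Gamma = O(n^2)$ and matching the interface required by the reduction.

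For the second step I would invoke \Cref{thm:semi_prediction} with all predicted deletion times set equal to the actual deletion times, so that $\eta_i = 0$ throughout. The theorem then yields worst-case fully dynamic update time $O(\Gamma \log^2 T) = O(n^2 \log^2 T)$. To replace $\log T$ by $\log n$, I would use that at any moment the graph holds at most $n$ vertices, so we may rebuild the data structure from scratch every $n$ updates at cost $O(n^3)$ per rebuild, which amortizes to $O(n^2)$ per update; spreading each rebuild evenly over the $n$ updates of its epoch turns this into a worst-case $O(n^2)$ overhead. Hence we may assume $T \le n$ and obtain $O(n^2 \log^2 n)$ worst-case update time.

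The one technical point worth checking, and the only place I anticipate friction, is that the \cite{PR2023} reduction realizes each deletion by \emph{rewinding} the corresponding insertion in time $O(\Gamma)$. Thorup's algorithm supports this cleanly by logging: since inserting $v$ only writes the new row and column of $D$ indexed by $v$ and monotonically decreases some of the existing entries, pushing each overwritten pair $(\text{cell},\text{old value})$ onto a per-insertion stack costs an extra $O(n^2)$ and allows an exact undo in $O(n^2) = O(\Gamma)$ time, which is exactly what the reduction needs.
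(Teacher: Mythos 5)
Your proposal is correct and follows essentially the same route as the paper: take Thorup's incremental APSP as the insert-only black box with $\Gamma = O(n^2)$, feed it into the \cite{PR2023} reduction (the error-free case of \Cref{thm:semi_prediction}), and bound $T \le n$ since only $n$ vertices are ever present. Your added remarks on realizing the rewind via logging overwritten distance entries and on de-amortizing the periodic rebuild are sound details that the paper leaves implicit.
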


We can extend this to the setting in which the insertions are fully online, but the deletions are predicted, with the error measure described using Theorem \ref{thm:semi_prediction}.

\thmSemiAPSP*

\section{Acknowlegment}
The authors would like to thank Nicole Megow and Danupon Nanongkai for inspiring discussions on algorithms with predictions.
\newpage

\bibliographystyle{alphaurl}
\bibliography{ref}

\end{document}